\theoremstyle{plain}
\newtheorem{thm}{\protect\theoremname}[section]
\theoremstyle{definition}
\newtheorem{defn}[thm]{\protect\definitionname}
\theoremstyle{plain}
\newtheorem{prop}[thm]{\protect\propositionname}
\theoremstyle{plain}
\newtheorem{lem}[thm]{\protect\lemmaname}
\theoremstyle{plain}
\newtheorem{cor}[thm]{\protect\corollaryname}
\renewcommand\[{\begin{equation}}
\renewcommand\]{\end{equation}}
\tikzstyle{tikzfig}=[baseline=-0.25em,scale=0.5]
\tikzstyle{none}=[inner sep=0mm]
\newcommand{\tikzfig}[1]{%
{\tikzstyle{every picture}=[tikzfig]
\IfFileExists{#1.tikz}
  {\input{#1.tikz}}
  {%
    \IfFileExists{./#1.tikz}
      {\input{./#1.tikz}}
      {\tikz[baseline=-0.5em]{\node[draw=red,font=\color{red},fill=red!10!white] {\textit{#1}};}}%
  }}%
}
\tikzstyle{every loop}=[]
\providecommand{\corollaryname}{Corollary}
\providecommand{\definitionname}{Definition}
\providecommand{\lemmaname}{Lemma}
\providecommand{\propositionname}{Proposition}
\providecommand{\theoremname}{Theorem}
\begin{document}
\global\long\def\Z{\mathbb{Z}}%

\global\long\def\N{\mathbb{\mathbb{N}}}%

\global\long\def\BR{\mathbb{\mathbb{R}}}%

\global\long\def\C{\mathbb{\mathbb{C}}}%

\global\long\def\M{\mathbb{\mathbb{\mathcal{M}}}}%

\global\long\def\J{\mathbb{\mathbb{\mathcal{J}}}}%

\global\long\def\H{\mathbb{\mathbb{\mathcal{H}}}}%

\global\long\def\OOO{\mathbb{\mathbb{\mathcal{O}}}}%


\tikzstyle{red dot}=[fill=red, draw=black, shape=circle]
\tikzstyle{singularity}=[fill=white, draw=black, shape=circle, inner sep=1.5pt]
\tikzstyle{black dot}=[fill=black, draw=black, shape=circle, inner sep=2pt]
\tikzstyle{small black dot}=[fill=black, draw=black, shape=circle, inner sep=1pt]
\tikzstyle{smaller black dot}=[fill=black, draw=black, shape=circle, inner sep=0.5pt]
\tikzstyle{ball}=[fill={rgb,255: red,191; green,191; blue,191}, draw={rgb,255: red,128; green,128; blue,128}, shape=circle, inner sep=10pt, dashed]

\tikzstyle{wormhole}=[-, dashed, line width=0.5mm]
\tikzstyle{particle 1}=[-, line width=0.5mm]
\tikzstyle{particle tip 1}=[->, line width=0.5mm]
\tikzstyle{particle 2}=[-, draw={rgb,255: red,191; green,191; blue,191}, line width=0.5mm]
\tikzstyle{particle tip 2}=[draw={rgb,255: red,191; green,191; blue,191}, ->, line width=0.5mm]
\tikzstyle{invalid particle}=[-, draw=red, line width=0.5mm]
\tikzstyle{axis}=[->]
\tikzstyle{particle double 1}=[-, line width=0.5mm, loosely dotted]
\tikzstyle{particle double 2}=[-, draw={rgb,255: red,191; green,191; blue,191}, line width=0.5mm, loosely dotted]
\tikzstyle{particle double tip 1}=[->, line width=0.5mm, loosely dotted]
\tikzstyle{particle double tip 2}=[->, draw={rgb,255: red,191; green,191; blue,191}, line width=0.5mm, loosely dotted]

\tikzstyle{particle color 1}=[-, line width=0.5mm, blue]
\tikzstyle{particle tip color 1}=[->, line width=0.5mm, blue]
\tikzstyle{particle color 2}=[-, line width=0.5mm, green]
\tikzstyle{particle tip color 2}=[->, line width=0.5mm, green]
\tikzstyle{particle color 3}=[-, line width=0.5mm, magenta]
\tikzstyle{particle tip color 3}=[->, line width=0.5mm, magenta]
\tikzstyle{particle color 4}=[-, line width=0.5mm, orange]
\tikzstyle{particle tip color 4}=[->, line width=0.5mm, orange]
\tikzstyle{particle color 5}=[-, line width=0.5mm, yellow]
\tikzstyle{particle tip color 5}=[->, line width=0.5mm, yellow]
\tikzstyle{particle double color 1}=[-, line width=0.5mm, loosely dotted, blue]
\tikzstyle{particle double color 2}=[-, line width=0.5mm, loosely dotted, green]
\tikzstyle{particle double tip color 1}=[->, line width=0.5mm, loosely dotted, blue]
\tikzstyle{particle double tip color 2}=[->, line width=0.5mm, loosely dotted, green]

\title{Time-Travel Paradoxes and Multiple Histories}
\author{\textbf{Jacob Hauser}\textsuperscript{a,b}\thanks{\protect\href{mailto:jahc2016@mymail.pomona.edu}{jahc2016@mymail.pomona.edu},
\protect\href{https://orcid.org/0000-0002-5104-5810}{https://orcid.org/0000-0002-5104-5810}} \enskip{}and \textbf{Barak Shoshany}\textsuperscript{a,c}\thanks{\protect\href{mailto:baraksh@gmail.com}{baraksh@gmail.com}, \protect\href{http://baraksh.com/}{http://baraksh.com/},
\protect\href{https://orcid.org/0000-0003-2222-127X}{https://orcid.org/0000-0003-2222-127X}}\\
\emph{\smallskip{}
}\\
\textsuperscript{a}\emph{ Perimeter Institute for Theoretical Physics,}\\
\emph{31 Caroline Street North, Waterloo, Ontario, N2L 2Y5, Canada\medskip{}
}\\
\textsuperscript{b}\emph{ Pomona College}\\
\emph{333 North College Way, Claremont, California, 91711, USA\medskip{}
}\\
\textsuperscript{c}\emph{ Department of Physics, Brock University}\\
\emph{1812 Sir Isaac Brock Way, St. Catharines, Ontario, L2S 3A1,
Canada}}
\maketitle
\begin{abstract}
If time travel is possible, it seems to inevitably lead to paradoxes.
These include consistency paradoxes, such as the famous grandfather
paradox, and bootstrap paradoxes, where something is created out of
nothing. One proposed class of resolutions to these paradoxes allows
for multiple histories (or timelines), such that any changes to the
past occur in a new history, independent of the one where the time
traveler originated. We introduce a simple mathematical model for
a spacetime with a time machine, and suggest two possible multiple-histories
models, making use of branching spacetimes and covering spaces respectively.
We use these models to construct novel and concrete examples of multiple-histories
resolutions to time travel paradoxes, and we explore questions such
as whether one can ever come back to a previously visited history
and whether a finite or infinite number of histories is required.
Interestingly, we find that the histories may be finite and cyclic
under certain assumptions, in a way which extends the Novikov self-consistency
conjecture to multiple histories and exhibits hybrid behavior combining
the two. Investigating these cyclic histories, we rigorously determine
how many histories are needed to fully resolve time travel paradoxes
for particular laws of physics. Finally, we discuss how observers
may experimentally distinguish between multiple histories and the
Hawking and Novikov conjectures.
\end{abstract}
\tableofcontents{}

\section{Introduction}

The theory of general relativity, which describes the curvature of
spacetime and how it interacts with matter, has been verified to very
high precision over the last 100 years. As far as we can tell, general
relativity seems to be the correct theory of gravity, at least in
the regimes we can test. However, within this theory there exist certain
spacetime geometries which feature \emph{closed timelike curves (CTCs)}
or, more generally, \emph{closed causal}\footnote{Here, by ``causal'' we mean either timelike or null.}\emph{
curves (CCCs)}, thus allowing the violation of causality \cite{FTL_TT,Visser,Krasnikov,Lobo}.
The fact that these geometries are valid solutions to Einstein's equations
of general relativity indicates crucial gaps in our understanding
of gravity, spacetime, and causality.

Wormhole spacetimes and cosmological models admitting CTCs were first
explored in the decades following the discovery of general relativity
\cite{EinsteinRosen35,VanStockum38,Godel49}. Although these spacetimes
were clearly unphysical -- the wormholes were non-traversable, and
the cosmologies unrealistic -- they were followed, several decades
later, by \emph{traversable wormholes}, \emph{warp drives}, and other
spacetimes potentially supporting time travel \cite{Alcubierre94,MorrisThorne88,FrolovNovikov90,Gott91,Krasnikov98,EverettRoman97}.

These exotic geometries which allow violations of causality almost
always violate the \emph{energy conditions} \cite{Curiel:2014zba},
a set of assumptions imposed by hand and thought to ensure that matter
sources in general relativity are ``physically reasonable''. However,
it is unclear whether or not these conditions themselves are justified,
as many realistic physical models -- notably, quantum fields --
also violate some or all of the energy conditions.

In this paper, we consider two types of causality violations: \emph{consistency
paradoxes} and \emph{bootstrap paradoxes}. A familiar example of a
consistency paradox is the \emph{grandfather paradox}, where a time
traveler prevents their own birth by going to the past and killing
their grandfather before he met their grandmother. This then means
that the time traveler, having never been born, could not have gone
back in time to prevent their own birth in the first place.

More precisely, we define a consistency paradox as the absence of
a consistent evolution for appropriate initial conditions under appropriate
laws of physics. Following Krasnikov \cite{Krasnikov02}, ``appropriate
initial conditions'' are those defined on a spacelike hypersurface
in a \emph{causal region} of spacetime -- that is, a region containing
no CTCs -- and ``appropriate laws of physics'' are those which
respect locality and which allow consistent evolutions for all initial
conditions in entirely causal spacetimes.

Bootstrap paradoxes arise when certain information or objects exist
only along CTCs, and thus appear to be created from nothing. These
are classified by some as \emph{pseudo-paradoxes} because, unlike
consistency paradoxes, they do not indicate any physical contradictions
arising from reasonable assumptions \cite{Krasnikov}. Nevertheless,
they might make one feel slightly uncomfortable. Information in a
bootstrap paradox has no clear origin and does not appear to be conserved,
and events can occur which are impossible to predict from data in
a causal region of spacetime\footnote{Even in the absence of consistency paradoxes, CTCs occur in causality-violating
regions and thus behind a \emph{Cauchy horizon} \cite{hawking_ellis_1973}.
Consider a wormhole whose mouths are surrounded by vacuum and separated
more in time than in space. Then an object may, at any time, emerge
from the earlier mouth and travel to the later mouth along a CTC,
in a way unpredictable from outside the causality-violating region.}. Therefore, we explore these pseudo-paradoxes as well, identifying
the situations in which they do or do not occur in our models.

There exist several paths for addressing the potential causality violations
arising from such spacetimes \cite{Visser}. Two of these rely on
quantum effects to resolve time travel paradoxes. The \emph{Hawking
Chronology Protection Conjecture }simply suggests that \textquotedblleft the
laws of physics do not allow the appearance of {[}CTCs{]}\textquotedblright{}
\cite{Hawking92}. Under this conjecture, quantum effects or other
laws of physics ensure that the geometry of spacetime cannot be manipulated
to allow CTCs. Deutsch's quantum time travel model, also known as
D-CTCs, resolves paradoxes by modifying quantum mechanics such that
the equation of motion is no longer unitary nor linear in the presence
of CTCs \cite{Deutsch91}.

Two other approaches address causality violations without necessarily
appealing to quantum effects. The \emph{Novikov Self-Consistency Conjecture
}holds that \textquotedblleft the only solutions to the laws of physics
that can occur locally in the real universe are those which are globally
self-consistent\textquotedblright{} \cite{Novikov90}. Thus, whether
or not CTCs are physically allowed, they can never cause valid initial
conditions to evolve in a causality-violating fashion. The \emph{multiple-histories}
(or \emph{multiple-timelines}) approach encompasses models which resolve
time travel paradoxes by allowing events to occur along different
distinct histories.

In this paper, we seek to understand and resolve causality violations
classically, so the latter two approaches are of particular interest.
In the context of the Novikov conjecture, many systems which at first
glance appear to contain consistency paradoxes have in fact been shown
to support consistent solutions for all initial conditions \cite{Consortium91,Friedman97}.
Nevertheless, clear paradoxes have been formulated which are incompatible
with the Novikov conjecture. In particular, Krasnikov used a toy model
with a specific set of physical laws in a causality-violating spacetime
to develop such a paradox in \cite{Krasnikov02}. As one of very few
concrete examples of true time travel paradoxes in the literature,
Krasnikov's model is a natural environment for us to explore the multiple-histories
approach.

This exploration serves two purposes. First, the multiple-histories
approach has traditionally been presented as a \emph{branching spacetime}
model, utilizing non-Hausdorff\footnote{A topology satisfies the \emph{Hausdorff condition} (or ``is Hausdorff'')
if and only if for any two distinct points $x_{1}\ne x_{2}$ there
exist two open neighborhoods $\OOO_{1}\ni x_{1}$ and $\OOO_{2}\ni x_{2}$
such that $\OOO_{1}\cap\OOO_{2}=\emptyset$.} (or perhaps non-locally-Euclidean \cite{McCabe}) manifolds to allow
distinct futures with shared pasts \cite{Penrose79,Visser93}. However,
the actual mechanics of resolving paradoxes using a branching spacetime
has been underdeveloped in the literature, and such constructions
present considerable mathematical challenges. Therefore, by constructing
two explicit multiple-histories models -- one mimicking a branching
spacetime and the other utilizing \emph{covering spaces} -- we provide
concrete examples of the multiple-histories approach.

Second, by demonstrating that these multiple-histories models can
prevent the appearance of consistency paradoxes entirely, we show
that a Novikov-like conjecture may hold over multiple histories, reconciling
the incompatibility between Krasnikov's model and Novikov's conjecture.
In particular, this extended Novikov conjecture holds for certain
multiple-histories resolutions containing CTCs spanning a finite number
of histories. From this perspective, the traditional Novikov conjecture
is preserved when paradoxes are absent using only one history.

This paper is organized as follows. First, in Chapter \ref{sec:A-Model-for},
we describe the twisted Deutsch-Politzer time machine and Krasnikov's
paradox model. Then, in Chapter \ref{sec:Generalizing-the-Model},
we generalize this model by allowing for additional histories, additional
particles, and additional particle ``colors''.

In Chapter \ref{sec:Infinite} we describe our two models of multiple
histories, branching spacetimes and covering spaces, in more detail.
We show how they both prevent the appearance of consistency and bootstrap
paradoxes for any number of particles and colors when an unlimited
number of histories is allowed, such that every instance of time travel
leads to a new history, and a time traveler may never return to a
previous history.

In Chapter \ref{sec:Finite} we further leverage the covering space
model to determine whether a finite number of histories could be sufficient
to resolve time travel paradoxes, and if so, how many histories are
needed. We prove several useful mathematical results, and find a condition
on the number of histories required to resolve paradoxes given the
number of colors -- that the number of histories must be divisible
by the number of colors. We also show that, although consistency paradoxes
are resolved, bootstrap paradoxes still exist if the histories are
cyclic -- but they can be avoided by reinterpreting the particle
interactions in our model.

In Chapter \ref{sec:Analysis-of-Our} we analyze several aspects of
our multiple-histories models. We discuss how, even if the histories
are cyclic, an extended Novikov conjecture can still hold over a closed
causal curve connecting all of the histories together, resulting in
physical observations combining those expected from the Novikov conjecture
with those found in multiple-histories resolutions. Furthermore, we
explore how it might be possible to experimentally distinguish --
at least in principle -- between the Hawking, Novikov, branching,
and covering space scenarios.

Finally, in Chapter \ref{sec:Discussion-and-Future} we summarize
our results and suggest avenues for future exploration.

\section{\label{sec:A-Model-for}Krasnikov's Paradox Model}

\subsection{\label{subsec:Deutsch-Politzer}The Deutsch-Politzer Time Machine}

An early attempt at formalizing a consistency paradox was proposed
by Polchinski: perhaps a billiard ball traversing a wormhole time
machine might emerge in the past and collide with its past self, ensuring
that it cannot enter the wormhole in the first place. However, Echeverria,
Klinkhammer, and Thorne found an infinite set of consistent solutions
for many reasonable initial conditions, thus showing that this system
in fact possesses no paradoxes \cite{Consortium91}. Furthermore,
it has been shown that no paradoxes exist even when considering more
general physical possibilities \cite{MikheevaNovikov93}. 

A similar construction was attempted using the \emph{Deutsch-Politzer
(DP) space} \cite{Deutsch91,Politzer92} in \cite{KalyanaRama:1994ag},
and although this construction was shown to be flawed in \cite{Krasnikov97},
a modification of this construction known as the \emph{twisted Deutsch-Politzer
(TDP) space} was used in \cite{Krasnikov02,Krasnikov} to construct
a more compelling paradox.

In 1+1 spacetime dimensions with coordinates $\left(t,x\right)$,
the DP space is constructed by associating the line $(1,x)$ with
$(-1,x)$ for $-1<x<1$ in Minkowski space. The TDP space is constructed
in an analogous way, by instead associating the line $(1,x)$ with
$(-1,-x)$ for $-1<x<1$. This means that particles entering the line
at $t=1$ will emerge at $t=-1$ ``twisted'', that is, with their
spatial orientation inverted. In both cases, the associated lines
act as mouths of a wormhole. The DP and TDP spaces are illustrated
in Figures \ref{fig:DP} and \ref{fig:TDP}.

In both spacetimes, there must be singularities at $(t,x)=(\pm1,\pm1)$,
as these points cannot be included without violating the Hausdorff
condition \cite{Krasnikov}. At all other points, the spacetimes are
flat, and we can use the same coordinates we used in the original
Minkowski space, as long as we recognize that $(1,x)$ and $(-1,\pm x)$
(with plus in the case of DP and minus in the case of TDP) refer to
the same points for $-1<x<1$ \cite{Politzer92}. In this paper, we
will ignore the presence of the singularities for the sake of simplicity,
motivated by the fact that traversable wormholes in 3+1 dimensions,
for which the DP and TDP spaces are a toy model, do not in general
possess singularities.

The \emph{causality-violating region}, denoted $J^{0}\left(\M\right)$
where $\M$ is the spacetime manifold, is the set of all points $p$
which are connected to themselves by a closed causal curve. Each such
point is in its own future and past. This is depicted for the TDP
space in Figure \ref{fig:causal-structure}.
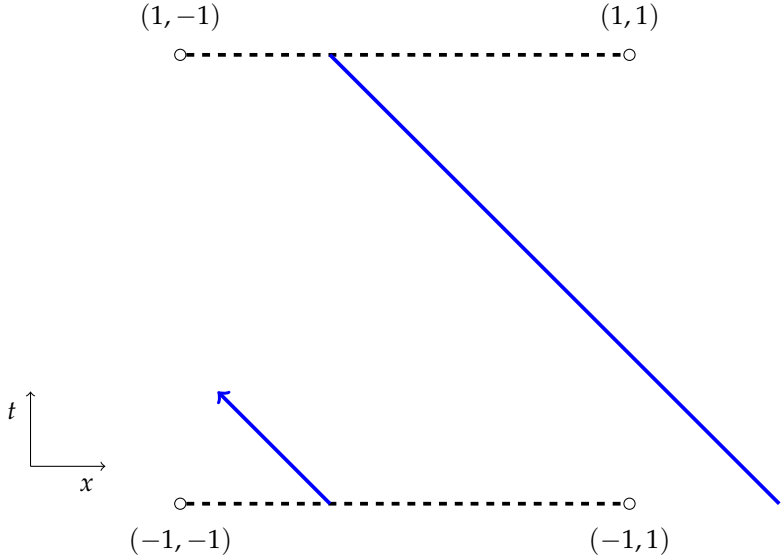
\begin{figure}[!tph]
\begin{centering}
\begin{adjustbox}{width=0.66\textwidth}\begin{tikzpicture}
	\begin{pgfonlayer}{nodelayer}
		\node [style=singularity] (0) at (-3, 3) {};
		\node [style=singularity] (1) at (3, 3) {};
		\node [style=singularity] (2) at (-3, -3) {};
		\node [style=singularity] (3) at (3, -3) {};
		\node [style=none] (4) at (-1, 3) {};
		\node [style=none] (5) at (5, -3) {};
		\node [style=none] (6) at (-1, -3) {};
		\node [style=none] (7) at (-2.5, -1.5) {};
		\node [style=none] (8) at (-3, 3.5) {$(1,-1)$};
		\node [style=none] (9) at (3, 3.5) {$(1,1)$};
		\node [style=none] (10) at (-3, -3.5) {$(-1,-1)$};
		\node [style=none] (11) at (3, -3.5) {$(-1,1)$};
		\node [style=none] (12) at (-5, -2.5) {};
		\node [style=none] (13) at (-5, -1.5) {};
		\node [style=none] (14) at (-4, -2.5) {};
		\node [style=none] (18) at (-4.25, -2.75) {$x$};
		\node [style=none] (19) at (-5.25, -1.75) {$t$};
	\end{pgfonlayer}
	\begin{pgfonlayer}{edgelayer}
		\draw [style=wormhole] (0) to (1);
		\draw [style=wormhole] (2) to (3);
		\draw [style=particle color 1] (5.center) to (4.center);
		\draw [style=particle tip color 1] (6.center) to (7.center);
		\draw [style=axis] (12.center) to (13.center);
		\draw [style=axis] (12.center) to (14.center);
	\end{pgfonlayer}
\end{tikzpicture}\end{adjustbox}
\par\end{centering}
\caption{\label{fig:DP}In the DP space, the line $(1,x)$ is associated with
$(-1,x)$ for $-1<x<1$ in Minkowski space. This is a simplified model
for a wormhole time machine \cite{Visser}. After traversing the wormhole,
the particle emerges at an earlier value of $t$ and travels in the
same direction in $x$. }
\end{figure}
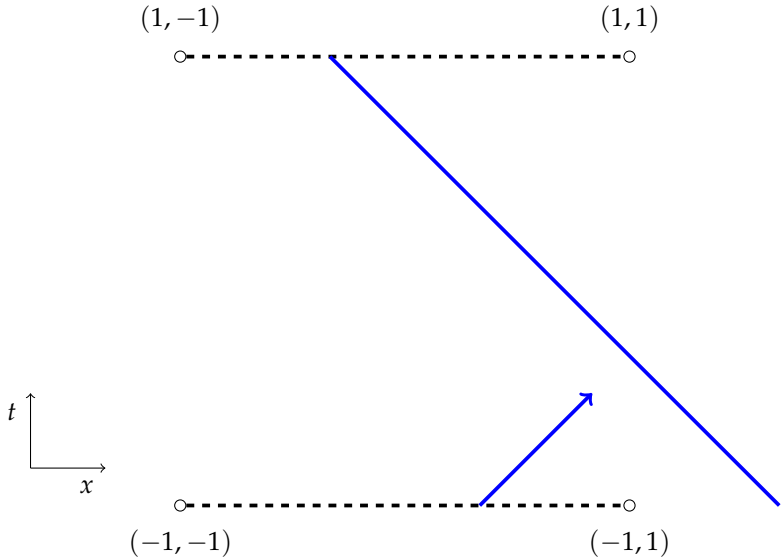
\begin{figure}[!tph]
\begin{centering}
\begin{adjustbox}{width=0.66\textwidth}\begin{tikzpicture}
	\begin{pgfonlayer}{nodelayer}
		\node [style=singularity] (0) at (-3, 3) {};
		\node [style=singularity] (1) at (3, 3) {};
		\node [style=singularity] (2) at (-3, -3) {};
		\node [style=singularity] (3) at (3, -3) {};
		\node [style=none] (4) at (-1, 3) {};
		\node [style=none] (5) at (5, -3) {};
		\node [style=none] (6) at (1, -3) {};
		\node [style=none] (7) at (2.5, -1.5) {};
		\node [style=none] (8) at (-3, 3.5) {$(1,-1)$};
		\node [style=none] (9) at (3, 3.5) {$(1,1)$};
		\node [style=none] (10) at (-3, -3.5) {$(-1,-1)$};
		\node [style=none] (11) at (3, -3.5) {$(-1,1)$};
		\node [style=none] (12) at (-5, -2.5) {};
		\node [style=none] (13) at (-5, -1.5) {};
		\node [style=none] (14) at (-4, -2.5) {};
		\node [style=none] (15) at (-4.25, -2.75) {$x$};
		\node [style=none] (16) at (-5.25, -1.75) {$t$};
	\end{pgfonlayer}
	\begin{pgfonlayer}{edgelayer}
		\draw [style=wormhole] (0) to (1);
		\draw [style=wormhole] (2) to (3);
		\draw [style=particle color 1] (5.center) to (4.center);
		\draw [style=particle tip color 1] (6.center) to (7.center);
		\draw [style=axis] (12.center) to (13.center);
		\draw [style=axis] (12.center) to (14.center);
	\end{pgfonlayer}
\end{tikzpicture}\end{adjustbox}
\par\end{centering}
\caption{\label{fig:TDP}In the TDP space, $(1,x)$ is instead associated with
$(-1,-x)$ for $-1<x<1$. After emerging from the wormhole, the particle
will travel in the opposite direction in $x$.}
\end{figure}
\begin{figure}[!tph]
\centering{}\begin{adjustbox}{width=\textwidth}\begin{tikzpicture}
	\begin{pgfonlayer}{nodelayer}
		\node [style=singularity] (0) at (-3, 3) {};
		\node [style=singularity] (1) at (3, 3) {};
		\node [style=singularity] (2) at (-3, -3) {};
		\node [style=singularity] (3) at (3, -3) {};
		\node [style=none] (8) at (-3, 3.5) {$(1,-1)$};
		\node [style=none] (9) at (3, 3.5) {$(1,1)$};
		\node [style=none] (10) at (-3, -3.5) {$(-1,-1)$};
		\node [style=none] (11) at (3, -3.5) {$(-1,1)$};
		\node [style=none] (12) at (-5, -2.5) {};
		\node [style=none] (13) at (-5, -1.5) {};
		\node [style=none] (14) at (-4, -2.5) {};
		\node [style=none] (15) at (-4.25, -2.75) {$x$};
		\node [style=none] (16) at (-5.25, -1.75) {$t$};
		\node [style=none] (17) at (6, 0) {};
		\node [style=none] (18) at (-6, 0) {};
		\node [style=none] (21) at (0, 0) {\large $J^0(\M)$};
		\node [style=none] (22) at (-5.5, 2.5) {\large $\M$};
		\node [style=none] (23) at (-5.25, -3.5) {};
		\node [style=none] (24) at (5, -3) {};
		\node [style=none] (25) at (8, -5) {};
		\node [style=none] (26) at (5.5, -2.5) {};
		\node [style=none] (27) at (-6.75, -5) {};
		\node [style=none] (28) at (7, -5) {};
		\node [style=none] (29) at (-9, -5) {};
		\node [style=none] (30) at (9, -5) {};
	\end{pgfonlayer}
	\begin{pgfonlayer}{edgelayer}
		\draw [style=wormhole] (0) to (1);
		\draw [style=wormhole] (2) to (3);
		\draw [style=axis] (12.center) to (13.center);
		\draw [style=axis] (12.center) to (14.center);
		\draw [style=particle 1] (3) to (17.center);
		\draw [style=particle 1] (17.center) to (1);
		\draw [style=particle 1] (0) to (18.center);
		\draw [style=particle 1] (18.center) to (2);
		\draw [style=particle tip color 1] (27.center) to (23.center);
		\draw [style=particle tip color 2] (28.center) to (24.center);
		\draw [style=particle tip color 1] (25.center) to (26.center);
		\draw [style=particle 2] (25.center) to (28.center);
		\draw [style=particle 2] (28.center) to (27.center);
		\draw [style=particle 2] (30.center) to (25.center);
		\draw [style=particle 2] (27.center) to (29.center);
	\end{pgfonlayer}
\end{tikzpicture}\end{adjustbox}\caption{\label{fig:causal-structure}The causality-violating region $J^{0}(\protect\M)$
for the TDP space $\protect\M$ is contained between the two associated
lines in $x$. The gray spacelike line indicates a choice of a reasonable
surface on which to define initial conditions. We also see particles
of two different colors, blue and green, emerging from the right and
left; the meaning of these colors is explained in Section \ref{subsec:Krasnikov's-Paradox}.}
\end{figure}
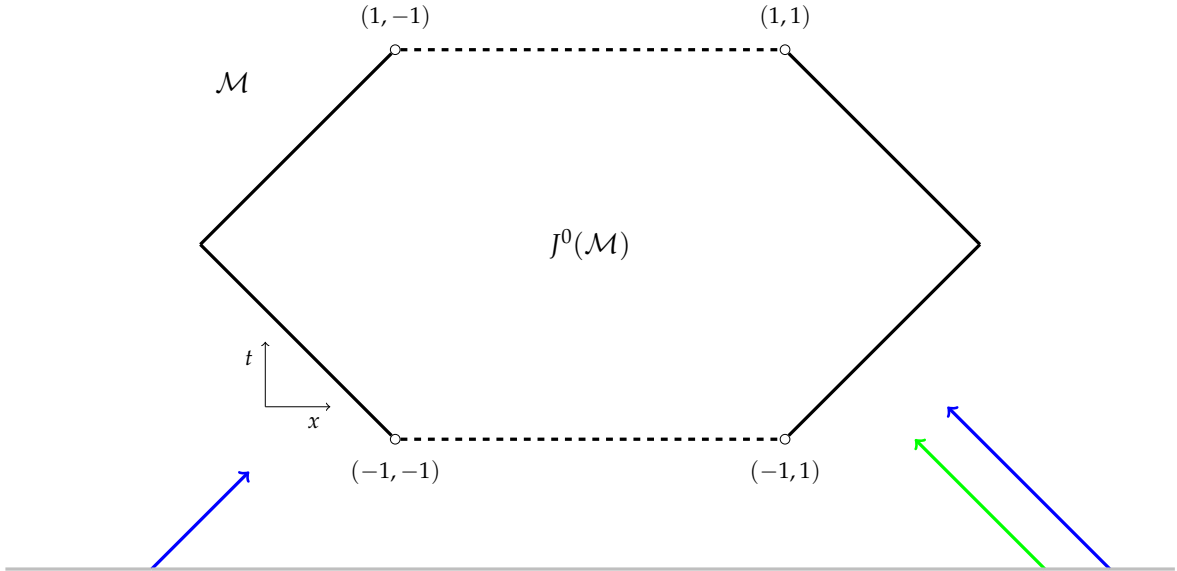

\subsection{\label{subsec:Krasnikov's-Paradox}Particles and Interaction Vertices}

Krasnikov \cite{Krasnikov02,Krasnikov} constructs a paradox in the
TDP space by introducing point particles accompanied by a set of physical
laws:
\begin{enumerate}
\item The particles are massless, and thus follow null geodesics\footnote{Note that this means we should discuss CCCs (closed causal curves,
where here ``causal'' means either timelike or null) and not CTCs
(closed timelike curves), although both types exist in the TDP space.}.
\item Whenever two particle worldlines intersect, the two particles interact.
This interaction can be interpreted as an elastic collision, with
each particle flipping its direction of movement. Later we will see
that this can lead to bootstrap paradoxes, and suggest a different
interpretation, where particles instead go through each other, continuing
in the same direction they were going.
\item Each particle has one of two colors\footnote{This property is named ``flavor'' in \cite{Krasnikov02} and ``charge''
in \cite{Krasnikov}. Here we adopt the name ``color'' in order
to make the visualization clearer, and also to avoid the impression
that this quantity is conserved.}. In every interaction, each particle flips its color (independently
of the color of the other particle), as illustrated in Figure \ref{fig:Four-vertices}.
\end{enumerate}
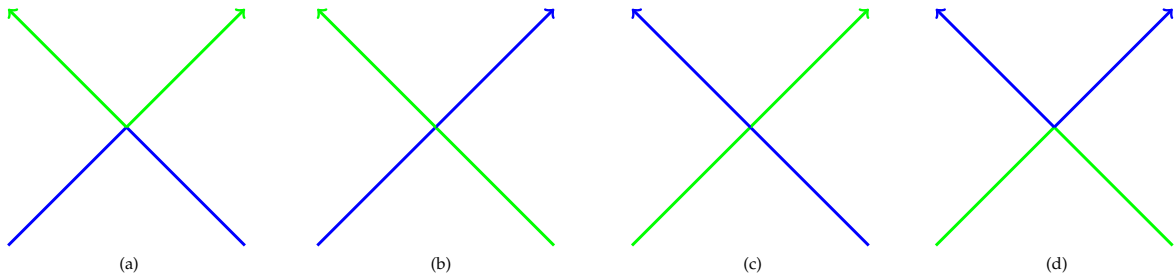
\begin{figure}[!tph]
\begin{adjustbox}{width=\textwidth}\subfloat[]{\centering{}\begin{tikzpicture}
	\begin{pgfonlayer}{nodelayer}
		\node [style=none] (0) at (2, -2) {};
		\node [style=none] (1) at (-2, 2) {};
		\node [style=none] (2) at (2, 2) {};
		\node [style=none] (3) at (-2, -2) {};
		\node [style=none] (4) at (0, 0) {};
	\end{pgfonlayer}
	\begin{pgfonlayer}{edgelayer}
		\draw [style=particle color 1] (3.center) to (4.center);
		\draw [style=particle color 1] (0.center) to (4.center);
		\draw [style=particle tip color 2] (4.center) to (2.center);
		\draw [style=particle tip color 2] (4.center) to (1.center);
	\end{pgfonlayer}
\end{tikzpicture}}\hfill{}\hspace{1cm}\subfloat[]{\begin{centering}
\begin{tikzpicture}
	\begin{pgfonlayer}{nodelayer}
		\node [style=none] (0) at (2, -2) {};
		\node [style=none] (1) at (-2, 2) {};
		\node [style=none] (2) at (2, 2) {};
		\node [style=none] (3) at (-2, -2) {};
		\node [style=none] (4) at (0, 0) {};
	\end{pgfonlayer}
	\begin{pgfonlayer}{edgelayer}
		\draw [style=particle color 1] (3.center) to (4.center);
		\draw [style=particle color 2] (0.center) to (4.center);
		\draw [style=particle tip color 1] (4.center) to (2.center);
		\draw [style=particle tip color 2] (4.center) to (1.center);
	\end{pgfonlayer}
\end{tikzpicture}
\par\end{centering}
}\hfill{}\hspace{1cm}\subfloat[]{\centering{}\begin{tikzpicture}
	\begin{pgfonlayer}{nodelayer}
		\node [style=none] (0) at (2, -2) {};
		\node [style=none] (1) at (-2, 2) {};
		\node [style=none] (2) at (2, 2) {};
		\node [style=none] (3) at (-2, -2) {};
		\node [style=none] (4) at (0, 0) {};
	\end{pgfonlayer}
	\begin{pgfonlayer}{edgelayer}
		\draw [style=particle color 2] (3.center) to (4.center);
		\draw [style=particle color 1] (0.center) to (4.center);
		\draw [style=particle tip color 2] (4.center) to (2.center);
		\draw [style=particle tip color 1] (4.center) to (1.center);
	\end{pgfonlayer}
\end{tikzpicture}}\hfill{}\hspace{1cm}\subfloat[]{\centering{}\begin{tikzpicture}
	\begin{pgfonlayer}{nodelayer}
		\node [style=none] (0) at (2, -2) {};
		\node [style=none] (1) at (-2, 2) {};
		\node [style=none] (2) at (2, 2) {};
		\node [style=none] (3) at (-2, -2) {};
		\node [style=none] (4) at (0, 0) {};
	\end{pgfonlayer}
	\begin{pgfonlayer}{edgelayer}
		\draw [style=particle color 2] (3.center) to (4.center);
		\draw [style=particle color 2] (0.center) to (4.center);
		\draw [style=particle tip color 1] (4.center) to (2.center);
		\draw [style=particle tip color 1] (4.center) to (1.center);
	\end{pgfonlayer}
\end{tikzpicture}}\end{adjustbox}\caption{\label{fig:Four-vertices}The four possible distinct vertices for
particle collisions in Krasnikov's model. Time is the vertical axis,
so the particles always come from the bottom. Note how each blue particle
changes into a green particle, and vice versa, in every collision.}
\end{figure}

The first law considerably simplifies the discussion by allowing us
to ignore timelike paths, and the second follows the spirit of Polchinski's
paradox. However, these two laws alone still permit consistent solutions
analogous to those that have been found for the Polchinski paradox,
so the third law is introduced to prevent this\footnote{Krasnikov also considers that particles appearing from the singular
points $(t,x)=(-1,\pm1)$ may allow for consistent solutions, and
introduces a fourth law to prevent this. This law adds another property
-- named ``color'' in both \cite{Krasnikov02} and \cite{Krasnikov},
but \textbf{not to be confused }with the property we call ``color''
here -- such that particles only interact with other particles of
the same ``color'', and the ``color'' itself never changes. Having
three such ``colors'' is sufficient for the purpose of preventing
consistent solutions, since there are two singularities, so they can
produce consistent solutions for at most two of the ``colors''.
In this paper, we will ignore the singularities for the sake of simplicity,
and thus the only property we will need is the one defined in law
number 3.}. These physical laws respect locality and allow consistent evolution
for all initial conditions in entirely causal spacetimes. Importantly,
one must also assume that the particles are all test particles, and
do not influence the geometry of spacetime via Einstein's equation.

We can unite all four possible vertices into a more readily generalizable
form by enumerating the two colors as $0,1\in\Z_{2}$, so that a particle's
color increases by $1\thinspace(\mathrm{mod}\thinspace2)$ after each
collision. Using these physical laws, both types of paradoxes --
consistency and bootstrap -- are illustrated in Figure \ref{fig:Paradoxes}.

\begin{figure}[!tph]
\centering{}\begin{adjustbox}{width=0.66\textwidth}\begin{tikzpicture}
	\begin{pgfonlayer}{nodelayer}
		\node [style=singularity] (0) at (-3, 3) {};
		\node [style=singularity] (1) at (3, 3) {};
		\node [style=singularity] (2) at (-3, -3) {};
		\node [style=singularity] (3) at (3, -3) {};
		\node [style=none] (5) at (5, -3) {};
		\node [style=none] (6) at (1, -3) {};
		\node [style=none] (7) at (3, -1) {};
		\node [style=none] (8) at (-1, 3) {};
		\node [style=none] (9) at (5, 1) {};
		\node [style=none] (10) at (-5, -2.5) {};
		\node [style=none] (11) at (-5, -1.5) {};
		\node [style=none] (12) at (-4, -2.5) {};
		\node [style=none] (13) at (-4.25, -2.75) {$x$};
		\node [style=none] (14) at (-5.25, -1.75) {$t$};
		\node [style=none] (15) at (-3, 3.5) {$(1,-1)$};
		\node [style=none] (16) at (3, 3.5) {$(1,1)$};
		\node [style=none] (17) at (-3, -3.5) {$(-1,-1)$};
		\node [style=none] (18) at (3, -3.5) {$(-1,1)$};
	\end{pgfonlayer}
	\begin{pgfonlayer}{edgelayer}
		\draw [style=wormhole] (0) to (1);
		\draw [style=wormhole] (2) to (3);
		\draw [style=particle 2] (6.center) to (7.center);
		\draw [style=particle color 1] (5.center) to (7.center);
		\draw [style=particle 2] (7.center) to (8.center);
		\draw [style=particle tip color 2, in=225, out=45] (7.center) to (9.center);
		\draw [style=axis] (10.center) to (11.center);
		\draw [style=axis] (10.center) to (12.center);
	\end{pgfonlayer}
\end{tikzpicture}\end{adjustbox}\caption{\label{fig:Paradoxes}An illustration of the consistency and bootstrap
paradoxes in Krasnikov's model. The blue and green lines represent
the two possible particle colors, as above. The gray lines indicate
a particle which \textbf{cannot} be assigned a consistent color.}
\end{figure}
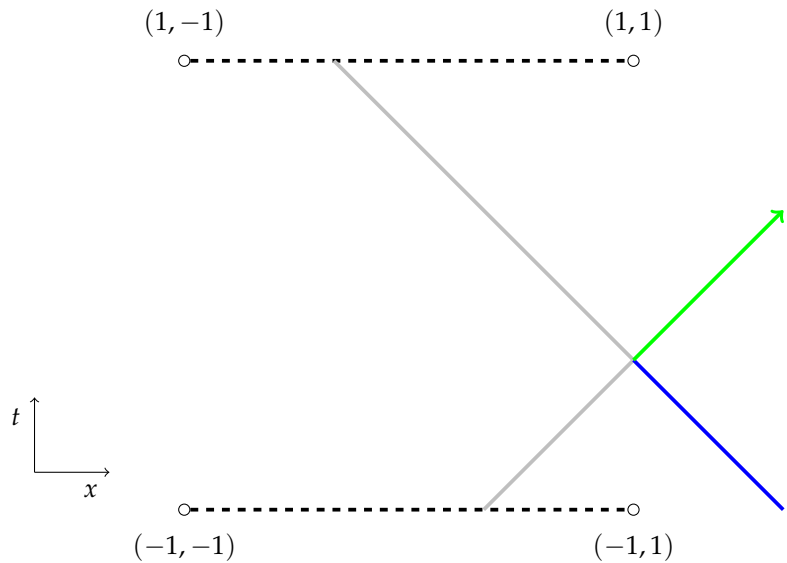

First, the particle emerging from the time machine (in gray) ends
up falling into the time machine again, so it appears out of nowhere
and exists only within the CCC -- causing a bootstrap paradox. Second,
when it collides with the particle coming from the causal region (in
blue), both must flip color. For the blue particle, this is not a
problem -- it simply changes into a green particle. However, if the
gray particle were initially blue, then it would have to change into
green, but this means it would enter the time machine as a green particle
and exit as a blue particle -- which is an inconsistency. Of course,
the same inconsistency also applies if the gray particle is initially
green. Therefore, there is no choice of color which is consistent
along the particle's entire path -- thus, we have a consistency paradox.

\section{\label{sec:Generalizing-the-Model}Generalizing the Model}

We now generalize Krasnikov's model in three ways. First, in order
to resolve the paradoxes, we introduce the possibility of multiple
histories. Next, to make sure we are considering \textbf{all }possible
initial conditions in this model, we introduce an arbitrary number
of incoming particles. Finally, to draw broader conclusions regarding
multiple-histories resolutions, we extend the model to include additional
particle colors.

\subsection{\label{subsec:Additional-Histories}Additional Histories}

In order to resolve the paradoxes demonstrated in Section \ref{subsec:Krasnikov's-Paradox},
we seek to extend our spacetime to a larger space where consistent
solutions exist. In particular, we seek to extend the TDP space to
allow for multiple, connected histories. For such an extension to
be reasonable, each history should resemble the TDP space, and all
of the histories should be identical outside the causal future of
the causality-violating region, where we expect results might differ.

With this assumption, the time traveler can go back in time to any
point in the past, and the world they will arrive at will indeed be
the \textbf{same }world from which they left, up until the moment
of arrival. However, as soon as they arrive, they inadvertently change
history -- even just by their mere presence, whether they want to
or not. Additional histories ensure that these changes can occur \textbf{independently}
of the time traveler's original history.

Initially, it may seem that only one additional history is sufficient
to resolve paradoxes -- but because each additional history should
resemble the TDP space, each introduces a new wormhole, which may
then be used to travel back in time once more. Thus, resolving paradoxes
over the entire space may require a larger number of histories --
perhaps infinitely many. We will discuss the different possibilities
in the next chapters.

Here, we consider two interesting ways to extend the TDP space. We
can depict both cases in a similar fashion, using multiple side-by-side
copies of illustrations like in Figures \ref{fig:DP}, \ref{fig:TDP},
and \ref{fig:Paradoxes}, but associating different regions of spacetime. 

First, seeking to mimic the behavior of branching spacetime models,
we can associate the line at $t=1$ in one history with the line at
$t=-1$ in the \textbf{next} history. If the events in the two histories
differ only after the wormhole mouths, then traversing the wormhole
would have the appearance of traversing a branching spacetime. An
observer would, upon traversing the time machine, appear in a new
``branch'' of the universe. The past of this branch would match
the observer's expectations, but the future could be changed without
causing an inconsistency. 

A drawback of this approach is that the first history no longer resembles
the original TDP space since it has only one wormhole mouth instead
of two -- there cannot be an \textbf{exit }to the time machine in
the first history, since there is no previous history for the time
traveler to come from. This motivates the use of covering spaces\footnote{We thank the anonymous referee for suggesting to formalize this notion
in terms of covering spaces.} as multiple-histories extensions.
\begin{defn}
Let $B$ be a topological space. A topological space $E$ is a \emph{covering
space} of $B$ if there exists a continuous surjection (or onto map)
$p:E\to B$ such that every $b\in B$ has an open neighborhood $U$
whose preimage $p^{-1}(U)$ is a union of disjoint sets $\{V_{\alpha}\}$
in $E$ where each $V_{\alpha}$ is homeomorphic to $U$ under $p$
\cite[p. 336]{munkres2000topology}.
\end{defn}

This method of extending the TDP space ensures that each history remains
faithful to the original topology of the space. In each history, there
is a wormhole entering the space and a wormhole exiting the space.
To accomplish this, the line at $t=1$ in one history is associated
with the line at $t=-1$ in the next history, as above, but the lines
at $t=-1$ in the first history and at $t=1$ in the second \textbf{also}
act as wormholes.

More rigorous constructions of specific covering spaces will follow
in Chapters \ref{sec:Infinite} and \ref{sec:Finite}. 

\subsection{Additional Particles}

We also expand the physical system under consideration by considering
an arbitrary number of particles. We will explore the general case
of $m$ particles approaching from the $-x$ direction and $n$ particles
approaching from the $+x$ direction for $m,n\in\N$, assuming that
$m\leq n$ (without loss of generality, since our spacetime is symmetric
under parity transformations $x\mapsto-x$). Considering this general
case will allow us to not only demonstrate paradoxes, but also prove
the absence of paradoxes in certain systems, motivating the utility
of the multiple-histories approach.

\subsection{Additional Colors}

In the previous chapter we expressed the two possible particle colors
as elements of $\Z_{2}$ and phrased Krasnikov's rule for color evolution
as an increase by $1\thinspace(\mathrm{mod}\thinspace2)$ to the color
of each particle after a collision. We can use the same rule for an
\textbf{arbitrary number }of colors.

Let $C\in\N$; then particle colors are elements of the cyclic group
$\Z_{C}=\left\{ 0,\ldots,C-1\right\} $ and follow the same color
evolution rule, increasing by $1\thinspace(\mathrm{mod}\thinspace C)$
after a collision. When $C=2$, we have Krasnikov's original model.
For $C>2$, we describe a more general system that will be useful
for exploring causality violations in a variety of cases. As long
as $C\neq1$, a single incoming particle leads to the same paradoxes
as in Figure \ref{fig:Paradoxes} above.

When $C>2$, particle interactions are no longer time-reversible.
However, our system has not completely lost its symmetry. In particular,
if we define \emph{color conjugation }as mapping a color $c$ to $-c\thinspace(\mathrm{mod}\thinspace C)$
in $\Z_{C}$, then CT symmetry is satisfied (with C representing color,
not charge). In fact, our interactions are symmetric under parity
transformations, so the system also satisfies CPT symmetry, as depicted
in Figure \ref{fig:Arbitrary-vertex-and}. Although the colors in
each leg of the resulting vertices will in general be different, the
system as a whole is invariant under these symmetries. 

\begin{figure}[!tph]
\centering{}\begin{adjustbox}{width=\textwidth}\subfloat[]{\centering{}\begin{tikzpicture}
	\begin{pgfonlayer}{nodelayer}
		\node [style=none] (0) at (2, -2) {};
		\node [style=none] (1) at (-2, 2) {};
		\node [style=none] (2) at (2, 2) {};
		\node [style=none] (3) at (-2, -2) {};
		\node [style=none] (4) at (0, 0) {};
		\node [style=none] (5) at (-2.5, -2.5) {$c$};
		\node [style=none] (6) at (2.5, -2.5) {$c'$};
		\node [style=none] (7) at (-2.5, 2.5) {$c+1$};
		\node [style=none] (8) at (2.5, 2.5) {$c'+1$};
	\end{pgfonlayer}
	\begin{pgfonlayer}{edgelayer}
		\draw [style=particle color 1, in=225, out=45] (3.center) to (4.center);
		\draw [style=particle color 2] (0.center) to (4.center);
		\draw [style=particle tip color 3] (4.center) to (2.center);
		\draw [style=particle tip color 4] (4.center) to (1.center);
	\end{pgfonlayer}
\end{tikzpicture}}\hfill{}\hspace{2cm}\subfloat[]{\centering{}\begin{tikzpicture}
	\begin{pgfonlayer}{nodelayer}
		\node [style=none] (0) at (2, -2) {};
		\node [style=none] (1) at (-2, 2) {};
		\node [style=none] (2) at (2, 2) {};
		\node [style=none] (3) at (-2, -2) {};
		\node [style=none] (4) at (0, 0) {};
		\node [style=none] (5) at (-2.5, -2.5) {$-c'-1$};
		\node [style=none] (6) at (2.5, -2.5) {$-c-1$};
		\node [style=none] (7) at (-2.5, 2.5) {$-c'$};
		\node [style=none] (8) at (2.5, 2.5) {$-c$};
	\end{pgfonlayer}
	\begin{pgfonlayer}{edgelayer}
		\draw [style=particle color 4, in=225, out=45] (3.center) to (4.center);
		\draw [style=particle color 3] (0.center) to (4.center);
		\draw [style=particle tip color 1] (4.center) to (2.center);
		\draw [style=particle tip color 2] (4.center) to (1.center);
	\end{pgfonlayer}
\end{tikzpicture}}\end{adjustbox}\caption{\label{fig:Arbitrary-vertex-and}(a) Given the identification between
colors and elements of $\protect\Z_{C}$, this single general vertex
captures all four vertices of Figure \ref{fig:Four-vertices} for
$C=2$, as well as those for any other values of $C$. For illustration,
the four colors in the figure -- blue, green, orange, and magenta
-- represent any of the $C$ possible colors, for the case $C\ge4$.\smallskip{}
\protect \\
(b) This vertex is the result of reversing time and parity and conjugating
color with respect to the vertex in (a). Since each particle still
leaves with a color one greater than it starts with, the result is
a valid vertex. In fact, performing CT or P transformations independently
also yields a valid vertex. In this example we took blue = 0, orange
= 1, green = 2, magenta = 3, $c=0$, $c'=2$, and $C=4$ in both (a)
and (b).}
\end{figure}
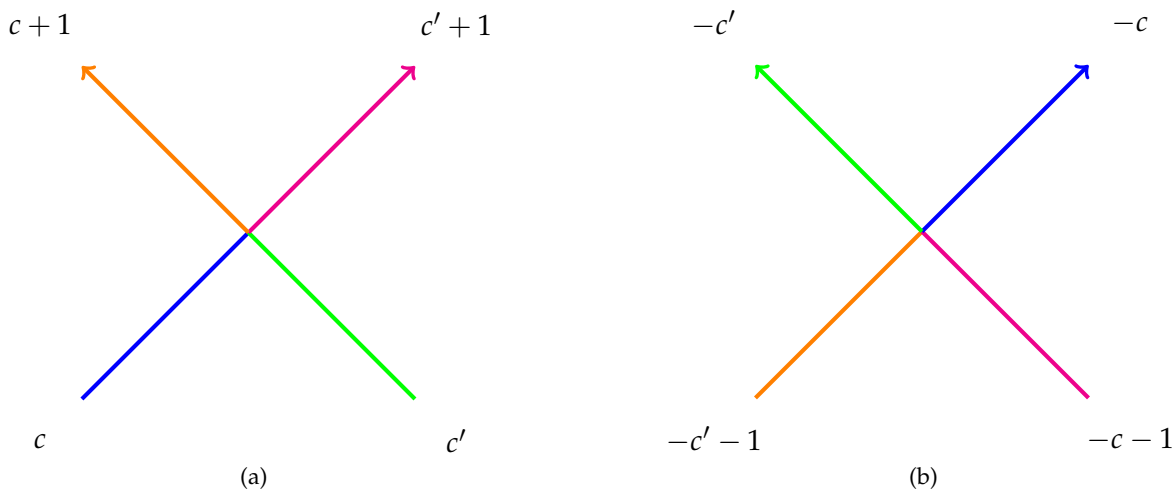

\section{\label{sec:Infinite}The Case of Unlimited Histories}

In the previous chapters we introduced a model, consisting of a particular
spacetime with specific physical laws, which admits initial conditions
for $C\geq2$ for which there is no consistent evolution, generating
a paradox. However, we also introduced the possibility for multiple
histories in some extended space. In what follows, we label these
histories with a new parameter $h\in\H$, so that points in the extended
space can be described by a triplet $(t,x,h)$. Such a parameter certainly
makes sense for branched extensions of the TDP space, where the first
space is unique and each subsequent history can be assigned a new
label. It also makes sense for a covering space extension, since the
cardinality of\footnote{Using the notation of Section \ref{subsec:Additional-Histories}.}
$\{V_{\alpha}\}=p^{-1}(U)$ is well-defined and constant, as the TDP
space is connected \cite[p. 56]{Hatcher:478079}.

In this chapter, we assume that a particle in a given history may
never return to the same history after leaving it. As before, let
$\M$ be the TDP spacetime manifold. Since the branching model has
a unique first history, it is appropriate to define $\H=\N$ and build
an extended space $\M'$ composed of a countably infinite number of
copies of Minkowski space where $(+1,x,h)$ is associated with $(-1,-x,h+1)$
for all $h\in\N$ and $-1<x<1$. This spacetime behaves as in Figure
\ref{fig:Infinite-Branching-Solution}. 

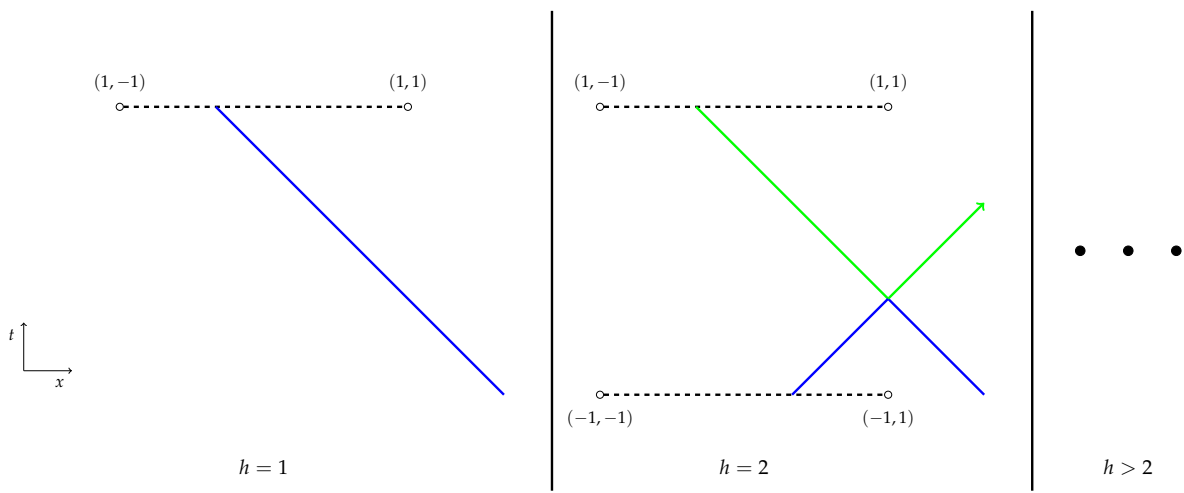
\begin{figure}[!tph]
\centering{}\begin{adjustbox}{width=\textwidth}\begin{tikzpicture}
	\begin{pgfonlayer}{nodelayer}
		\node [style=singularity] (0) at (-9, 3) {};
		\node [style=singularity] (1) at (-3, 3) {};
		\node [style=none] (4) at (-1, -3) {};
		\node [style=none] (6) at (-7, 3) {};
		\node [style=none] (9) at (-11, -2.5) {};
		\node [style=none] (10) at (-11, -1.5) {};
		\node [style=none] (11) at (-10, -2.5) {};
		\node [style=none] (12) at (-10.25, -2.75) {$x$};
		\node [style=none] (13) at (-11.25, -1.75) {$t$};
		\node [style=none] (14) at (-9, 3.5) {$(1,-1)$};
		\node [style=none] (15) at (-3, 3.5) {$(1,1)$};
		\node [style=singularity] (18) at (1, 3) {};
		\node [style=singularity] (19) at (7, 3) {};
		\node [style=singularity] (20) at (1, -3) {};
		\node [style=singularity] (21) at (7, -3) {};
		\node [style=none] (23) at (9, -3) {};
		\node [style=none] (24) at (5, -3) {};
		\node [style=none] (28) at (7, -1) {};
		\node [style=none] (30) at (9, 1) {};
		\node [style=none] (32) at (3, 3) {};
		\node [style=none] (38) at (1, 3.5) {$(1,-1)$};
		\node [style=none] (39) at (7, 3.5) {$(1,1)$};
		\node [style=none] (40) at (1, -3.5) {$(-1,-1)$};
		\node [style=none] (41) at (7, -3.5) {$(-1,1)$};
		\node [style=none] (42) at (0, 5) {};
		\node [style=none] (43) at (0, -5) {};
		\node [style=none] (44) at (10, -5) {};
		\node [style=none] (45) at (10, 5) {};
		\node [style=none] (47) at (4, -4.5) {\large $h=2$};
		\node [style=none] (48) at (-6, -4.5) {\large $h=1$};
		\node [style=black dot] (49) at (11, 0) {};
		\node [style=black dot] (50) at (12, 0) {};
		\node [style=black dot] (51) at (13, 0) {};
		\node [style=none] (52) at (12, -4.5) {\large $h>2$};
	\end{pgfonlayer}
	\begin{pgfonlayer}{edgelayer}
		\draw [style=wormhole] (0) to (1);
		\draw [style=particle color 1] (4.center) to (6.center);
		\draw [style=axis] (9.center) to (10.center);
		\draw [style=axis] (9.center) to (11.center);
		\draw [style=wormhole] (18) to (19);
		\draw [style=wormhole] (20) to (21);
		\draw [style=particle color 1] (23.center) to (28.center);
		\draw [style=particle color 2] (28.center) to (32.center);
		\draw [style=particle color 1] (24.center) to (28.center);
		\draw [style=particle tip color 2] (28.center) to (30.center);
		\draw [style=particle 1] (42.center) to (43.center);
		\draw [style=particle 1] (45.center) to (44.center);
	\end{pgfonlayer}
\end{tikzpicture}\end{adjustbox}\caption{\label{fig:Infinite-Branching-Solution}In the branching model, when
the blue particle enters the time machine at $h=1$, it comes out
twisted (since we are in a TDP space) at $h=2$. The new history has
an identical copy of the initial blue particle, but this time it encounters
itself (or more precisely, its copy from $h=1$) and the two particles
change their colors. A green particle then enters the time machine,
and continues to $h=3$, and so on. Thus we have avoided both consistency
and bootstrap paradoxes.}
\end{figure}

In contrast, the covering space model does \textbf{not} result in
a unique first history. Consequently, it is appropriate to define
$\H=\Z$ and to build an extended space $\M'$ composed of a countably
infinite number of copies of Minkowski space where $(+1,x,h)$ is
associated with $(-1,-x,h+1)$ for all $h\in\Z$ and $-1<x<1$. This
spacetime behaves as in Figure \ref{fig:Infinite-Covering-Solution}. 

\begin{figure}[!tph]
\centering{}\begin{adjustbox}{width=\textwidth}\begin{tikzpicture}
	\begin{pgfonlayer}{nodelayer}
		\node [style=singularity] (0) at (-9, 3) {};
		\node [style=singularity] (1) at (-3, 3) {};
		\node [style=singularity] (2) at (-9, -3) {};
		\node [style=singularity] (3) at (-3, -3) {};
		\node [style=none] (6) at (-11, -2.5) {};
		\node [style=none] (7) at (-11, -1.5) {};
		\node [style=none] (8) at (-10, -2.5) {};
		\node [style=none] (9) at (-10.25, -2.75) {$x$};
		\node [style=none] (10) at (-11.25, -1.75) {$t$};
		\node [style=none] (11) at (-9, 3.5) {$(1,-1)$};
		\node [style=none] (12) at (-3, 3.5) {$(1,1)$};
		\node [style=none] (13) at (-9, -3.5) {$(-1,-1)$};
		\node [style=none] (14) at (-3, -3.5) {$(-1,1)$};
		\node [style=singularity] (15) at (1, 3) {};
		\node [style=singularity] (16) at (7, 3) {};
		\node [style=singularity] (17) at (1, -3) {};
		\node [style=singularity] (18) at (7, -3) {};
		\node [style=none] (19) at (9, -3) {};
		\node [style=none] (20) at (5, -3) {};
		\node [style=none] (21) at (7, -1) {};
		\node [style=none] (22) at (9, 1) {};
		\node [style=none] (23) at (3, 3) {};
		\node [style=none] (24) at (1, 3.5) {$(1,-1)$};
		\node [style=none] (25) at (7, 3.5) {$(1,1)$};
		\node [style=none] (26) at (1, -3.5) {$(-1,-1)$};
		\node [style=none] (27) at (7, -3.5) {$(-1,1)$};
		\node [style=none] (28) at (0, 5) {};
		\node [style=none] (29) at (0, -5) {};
		\node [style=none] (32) at (4, -4.5) {\large $h=k+1$};
		\node [style=none] (33) at (-6, -4.5) {\large $h=k$};
		\node [style=none] (34) at (-1, -3) {};
		\node [style=none] (35) at (-5, -3) {};
		\node [style=none] (36) at (-3, -1) {};
		\node [style=none] (37) at (-1, 1) {};
		\node [style=none] (38) at (-7, 3) {};
		\node [style=none] (44) at (10, -5) {};
		\node [style=none] (45) at (10, 5) {};
		\node [style=none] (46) at (12, -4.5) {\large $h>k+1$};
		\node [style=none] (47) at (-14, -4.5) {\large $h<k$};
		\node [style=black dot] (49) at (11, 0) {};
		\node [style=black dot] (50) at (12, 0) {};
		\node [style=black dot] (51) at (13, 0) {};
		\node [style=none] (52) at (-12, -5) {};
		\node [style=none] (53) at (-12, 5) {};
		\node [style=black dot] (54) at (-13, 0) {};
		\node [style=black dot] (55) at (-14, 0) {};
		\node [style=black dot] (56) at (-15, 0) {};
	\end{pgfonlayer}
	\begin{pgfonlayer}{edgelayer}
		\draw [style=wormhole] (0) to (1);
		\draw [style=wormhole] (2) to (3);
		\draw [style=axis] (6.center) to (7.center);
		\draw [style=axis] (6.center) to (8.center);
		\draw [style=wormhole] (15) to (16);
		\draw [style=wormhole] (17) to (18);
		\draw [style=particle color 1] (19.center) to (21.center);
		\draw [style=particle color 1] (21.center) to (23.center);
		\draw [style=particle color 3] (20.center) to (21.center);
		\draw [style=particle tip color 4] (21.center) to (22.center);
		\draw [style=particle 1] (28.center) to (29.center);
		\draw [style=particle color 1] (34.center) to (36.center);
		\draw [style=particle color 3] (36.center) to (38.center);
		\draw [style=particle color 2] (35.center) to (36.center);
		\draw [style=particle tip color 4] (36.center) to (37.center);
		\draw [style=particle 1] (45.center) to (44.center);
		\draw [style=particle 1] (52.center) to (53.center);
	\end{pgfonlayer}
\end{tikzpicture}\end{adjustbox}\caption{\label{fig:Infinite-Covering-Solution}Unlike the branching model,
the covering space model has no unique first history. Therefore, we
depict two consecutive histories $k$ and $k+1$. Without loss of
generality, a green particle emerges from the time machine in history
$k$ where it collides with the incoming blue particle; here we are
using the color convention of Figure \ref{fig:Arbitrary-vertex-and}.
Both particles increase their colors as in Figure \ref{fig:Arbitrary-vertex-and}:
blue = 0 to orange = 1 and green = 2 to magenta = 3. In history $k+1$,
the same process occurs with a magenta particle emerging from the
time machine instead of a green particle, and the magenta particle
increases its color to blue = $4\thinspace(\mathrm{mod}\thinspace4)$.
Since there is a countably infinite number of time machines, the particle
traversing the time machines never completes a CCC, nor does any copy
of the incoming blue particle. Thus, we have again avoided both consistency
and bootstrap paradoxes.}
\end{figure}
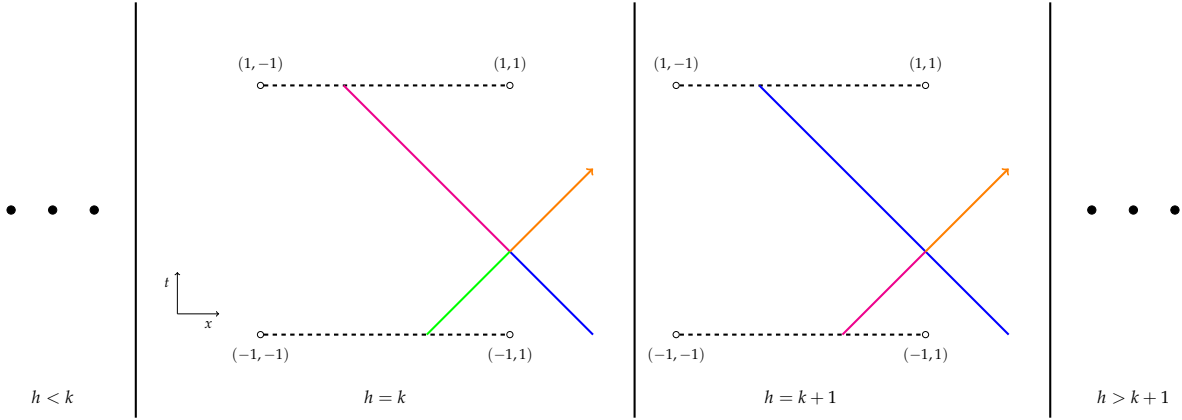

Having constructed this space, we must prove that it is indeed a covering
space of $\M$.
\begin{prop}
The extended space $\M'$ is a covering space of $\M$.
\end{prop}

\begin{proof}
Let $\M$ denote our base TDP space and let $\M'$ denote our extended
space with a countably infinite number of histories. Furthermore,
let $p:\M'\to\M$ be a covering map (or projection), defined by $(t,x,h)\xmapsto{p}(t,x)$
for $(t,x,h)\in\M'$. In order to show that $\M'$ is a covering space
of $\M$, we need to show that $p$ is both surjective (onto) and
continuous, and that each $m\in\M$ is contained in a neighborhood
$U$ whose preimage $p^{-1}(U)$ satisfies certain constraints. The
map $p$ is certainly surjective: for every $(t,x)\in\M$, each $(t,x,h)$
for the various histories $h$ is mapped to $(t,x)$ under $p$. The
map is also continuous, as it projects $\M'$ onto $\M$ without ripping
or tearing it.

For the remainder of this proof, we consider two cases: points along
the associated wormhole lines and points in the rest of the spacetime\footnote{Recall that the singularities at $(t,x)=(\pm1,\pm1)$ have been removed.}.
First, let $m=(t,x)\in\M$ be a point along an associated wormhole
line, so $t=\pm1$ and $-1<x<1$. Then, let $U$ be a ball around
$m$, small enough that it does not intersect the singularities at
$(t,x)=(\pm1,\pm1)$. Since $m$ is part of the TDP wormhole, $U$
contains points near \textbf{both} wormhole mouths, as depicted in
Figure \ref{fig:Open-Ball-1}.

The topology of $U$ is that of the union of two open balls which
intersect at a line. $p^{-1}(U)$ is composed of a countably infinite
number of such sets, now containing points from adjacent histories
$h$ and $h+1$, as in Figure \ref{fig:Open-Ball-2}. Since the number
of histories is infinite, so that no particle in a given history may
ever return to that same history, the sets in $p^{-1}(U)$ continue
in this pattern for all $h$. These sets are clearly disjoint, and
each is composed of two open balls intersecting at a line, so $p$
restricted to each is not only a bijection (since, once $p$ has been
restricted, the history data can be ignored -- rendering $p$ the
identity) but also a homeomorphism (since the topology of each $U_{k}^{+}\cup U_{k+1}^{-}$is
preserved under $p$) .

\begin{figure}[!tph]
\centering{}\begin{adjustbox}{width=0.66\textwidth}\begin{tikzpicture}
	\begin{pgfonlayer}{nodelayer}
		\node [style=ball] (21) at (-1, 3) {};
		\node [style=ball] (22) at (1, -3) {};
		\node [style=singularity] (0) at (-3, 3) {};
		\node [style=singularity] (1) at (3, 3) {};
		\node [style=singularity] (2) at (-3, -3) {};
		\node [style=singularity] (3) at (3, -3) {};
		\node [style=none] (8) at (-3, 3.5) {\small $(1,-1)$};
		\node [style=none] (9) at (3, 3.5) {\small $(1,1)$};
		\node [style=none] (10) at (-3, -3.5) {\small $(-1,-1)$};
		\node [style=none] (11) at (3, -3.5) {\small $(-1,1)$};
		\node [style=none] (12) at (-5, -2.5) {};
		\node [style=none] (13) at (-5, -1.5) {};
		\node [style=none] (14) at (-4, -2.5) {};
		\node [style=none] (15) at (-4.25, -2.75) {\small $x$};
		\node [style=none] (16) at (-5.25, -1.75) {\small $t$};
		\node [style=none] (17) at (-0.825, 2.85) {\footnotesize $m$};
		\node [style=none] (18) at (1.175, -3.15) {\footnotesize $m$};
		\node [style=smaller black dot] (19) at (-1, 3) {};
		\node [style=smaller black dot] (20) at (1, -3) {};
		\node [style=none] (23) at (-1.55, 3.55) { $U^+$};
		\node [style=none] (24) at (0.45, -2.45) { $U^-$};
		\node [style=none] (25) at (5, -3) {};
	\end{pgfonlayer}
	\begin{pgfonlayer}{edgelayer}
		\draw [style=wormhole] (0) to (1);
		\draw [style=wormhole] (2) to (3);
		\draw [style=axis] (12.center) to (13.center);
		\draw [style=axis] (12.center) to (14.center);
	\end{pgfonlayer}
\end{tikzpicture}\end{adjustbox}\caption{\label{fig:Open-Ball-1}Since $m$ is a point along the associated
wormhole line, it appears twice in our representation of the TDP space
-- once at $t=-1$ and once at $t=+1$. Therefore, our ball $U$
around $m$ is actually $U=U^{+}\cup U^{-},$ the union of balls around
each wormhole mouth. It is always possible to select such a ball which
does not intersect a singularity: if $m$ is a distance $\varepsilon>0$
away from a singularity, then the ball can be chosen to have radius
$\varepsilon/2$.}
\end{figure}
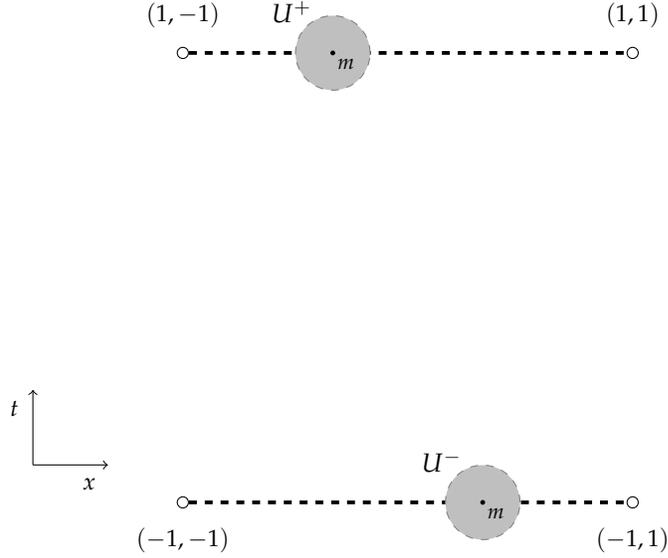

\begin{figure}[!tph]
\centering{}\begin{adjustbox}{width=\textwidth}\begin{tikzpicture}
	\begin{pgfonlayer}{nodelayer}
		\node [style=singularity] (0) at (-9, 3) {};
		\node [style=singularity] (1) at (-3, 3) {};
		\node [style=singularity] (2) at (-9, -3) {};
		\node [style=singularity] (3) at (-3, -3) {};
		\node [style=none] (6) at (-11, -2.5) {};
		\node [style=none] (7) at (-11, -1.5) {};
		\node [style=none] (8) at (-10, -2.5) {};
		\node [style=none] (9) at (-10.25, -2.75) {$x$};
		\node [style=none] (10) at (-11.25, -1.75) {$t$};
		\node [style=none] (11) at (-9, 3.5) {$(1,-1)$};
		\node [style=none] (12) at (-3, 3.5) {$(1,1)$};
		\node [style=none] (13) at (-9, -3.5) {$(-1,-1)$};
		\node [style=none] (14) at (-3, -3.5) {$(-1,1)$};
		\node [style=singularity] (15) at (1, 3) {};
		\node [style=singularity] (16) at (7, 3) {};
		\node [style=singularity] (17) at (1, -3) {};
		\node [style=singularity] (18) at (7, -3) {};
		\node [style=ball] (23) at (3, 3) {};
		\node [style=none] (24) at (1, 3.5) {$(1,-1)$};
		\node [style=none] (25) at (7, 3.5) {$(1,1)$};
		\node [style=none] (26) at (1, -3.5) {$(-1,-1)$};
		\node [style=none] (27) at (7, -3.5) {$(-1,1)$};
		\node [style=none] (28) at (0, 5) {};
		\node [style=none] (29) at (0, -5) {};
		\node [style=none] (32) at (4, -4.5) {\large $h=k+1$};
		\node [style=none] (33) at (-6, -4.5) {\large $h=k$};
		\node [style=ball] (38) at (-7, 3) {};
		\node [style=none] (44) at (10, -5) {};
		\node [style=none] (45) at (10, 5) {};
		\node [style=none] (46) at (12, -4.5) {\large $h>k+1$};
		\node [style=none] (47) at (-14, -4.5) {\large $h<k$};
		\node [style=black dot] (49) at (11, 0) {};
		\node [style=black dot] (50) at (12, 0) {};
		\node [style=black dot] (51) at (13, 0) {};
		\node [style=none] (52) at (-12, -5) {};
		\node [style=none] (53) at (-12, 5) {};
		\node [style=black dot] (54) at (-13, 0) {};
		\node [style=black dot] (55) at (-14, 0) {};
		\node [style=black dot] (56) at (-15, 0) {};
		\node [style=ball] (58) at (-5, -3) {};
		\node [style=ball] (59) at (5, -3) {};
		\node [style=small black dot] (60) at (3, 3) {};
		\node [style=small black dot] (61) at (5, -3) {};
		\node [style=small black dot] (62) at (-5, -3) {};
		\node [style=small black dot] (63) at (-7, 3) {};
		\node [style=none] (64) at (-7, 2.775) {\footnotesize $m_{k+1}$};
		\node [style=none] (65) at (-5, -3.225) {\footnotesize $m_{k}$};
		\node [style=none] (66) at (3, 2.775) {\footnotesize $m_{k+2}$};
		\node [style=none] (67) at (5, -3.225) {\footnotesize $m_{k+1}$};
		\node [style=none] (68) at (-7.6, 3.7) {$U^+_k$};
		\node [style=none] (69) at (-5.6, -2.3) {$U^-_k$};
		\node [style=none] (70) at (2.3, 3.7) {$U^+_{k+1}$};
		\node [style=none] (71) at (4.3, -2.3) {$U^-_{k+1}$};
	\end{pgfonlayer}
	\begin{pgfonlayer}{edgelayer}
		\draw [style=wormhole] (0) to (1);
		\draw [style=wormhole] (2) to (3);
		\draw [style=axis] (6.center) to (7.center);
		\draw [style=axis] (6.center) to (8.center);
		\draw [style=wormhole] (15) to (16);
		\draw [style=wormhole] (17) to (18);
		\draw [style=particle 1] (28.center) to (29.center);
		\draw [style=particle 1] (45.center) to (44.center);
		\draw [style=particle 1] (52.center) to (53.center);
	\end{pgfonlayer}
\end{tikzpicture}\end{adjustbox}\caption{\label{fig:Open-Ball-2}In our extension of the TDP space, wormhole
points are now associated between adjacent histories. As a result,
the ball around the point $m_{k+1}$ -- the point overlapping histories
$k$ and $k+1$, which projects down to $m$ under the map $p$ --
is equal to $U_{k}^{+}\cup U_{k+1}^{-}.$ The preimage $p^{-1}(U)=\bigcup_{k}(U_{k}^{+}\cup U_{k+1}^{-})$
is composed of a countably infinite number of such balls, each of
which is homeomorphic to $U^{+}\cup U^{-}$ from Figure \ref{fig:Open-Ball-1}.}
\end{figure}
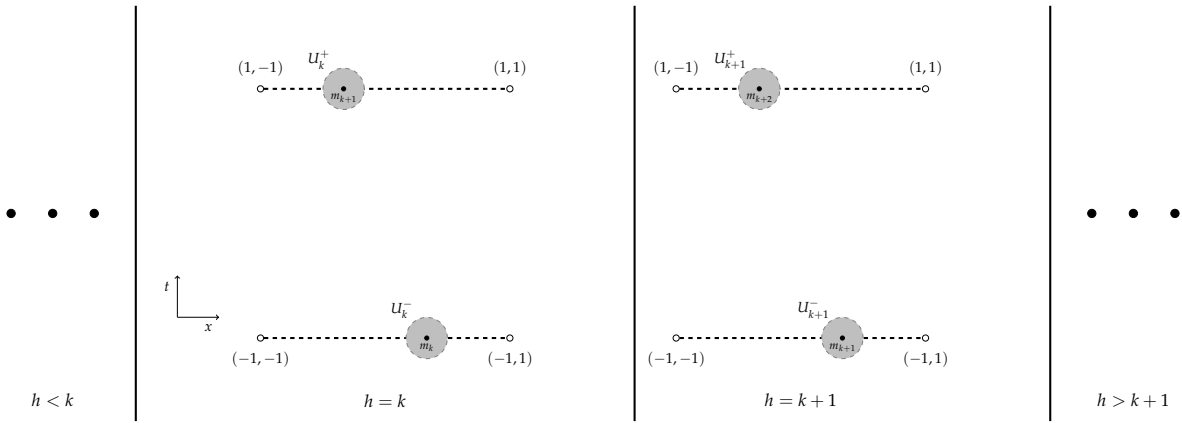

Second, let $m=(t,x)\in\M$ be a point that is not on a wormhole mouth,
and let $U$ be a ball around $m$, small enough that it intersects
neither the singularities nor the wormhole mouths. Then, $U$ has
the topology of a normal ball in flat space. Again, $p^{-1}(U)$ is
composed of a countably infinite number of such sets, which are clearly
disjoint. Also, $p$ restricted to each set acts as the identity map,
and is thus a homeomorphism. Consequently, $\M'$ is a covering space
of $\M$.
\end{proof}
This framework allows consistent solutions to our previously paradoxical
initial conditions. In both the branching model and the covering space
model, particles which would have followed CCCs in one history now
traverse multiple histories -- and since they may never return to
a previous history, they can never complete a closed loop. Consistency
paradoxes arise from conditions enforced along closed causal curves,
and bootstrap paradoxes arise from particles existing only inside
these closed curves; neither situation is possible in the unlimited
histories case, and thus both paradoxes are avoided. In other words,
we have avoided paradoxes created due to CCCs by simply avoiding any
actual CCCs\footnote{\label{fn:Although-CCCs-are}Although CCCs are no longer present,
this model still gives observers the \textbf{appearance} of time travel.
Adjacent histories are, by definition, precisely the same up until
the point in time where the time traveler exits the time machine.
Therefore, the time traveler observes a universe identical to their
initial history, but at an earlier point in time -- which is, colloquially,
the definition of time travel. However, at the moment the time traveler
exits the time machine, they have already ensured that this would
be a new history simply by existing at a point in spacetime where
they did not exist in their previous history.}.

\section{\label{sec:Finite}The Case of Finite Cyclic Histories}

Above we assumed that $h$ increases monotonically, so that the time
traveler may never return to a previous history. If there is no limit
on how many times time travel can occur -- and indeed, there is no
reason for such a limit to exist -- then this results in an infinite
number of histories. Since returning to a previous history is impossible,
and thus CCCs never form, it is straightforward to demonstrate the
absence of paradoxes.

However, we will now show that, at least within the covering space
model, it is in fact possible for a time traveler to return to a previous
history. Specifically, the covering space model provides a framework
for histories which are \textbf{cyclic} -- one can go from the last
history back to the first one. If $\M$ again denotes the TDP spacetime
manifold, then we construct an extended space $\M'$ composed of $H\in\N$
copies of Minkowski space where points are identified in the following
way:
\[
(+1,x,h)\leftrightarrow\begin{cases}
(-1,-x,h+1) & \textrm{if }1\le h\le H-1,\\
(-1,-x,1) & \textrm{if }h=H.
\end{cases}
\]

\begin{prop}
The extended space $\M'$ is a covering space of $\M$.
\end{prop}

\begin{proof}
This proof is almost identical to the one in Chapter \ref{sec:Infinite}.
There are only two differences. First, adopting the same notation,
$\{V_{\alpha}\}=p^{-1}(U)$ is composed of only $H$ disjoint sets.
Second, sets in $p^{-1}(U)$ still contain points from histories $h$
and $h+1$ when $U$ intersects a wormhole, but in this case one such
set contains points from histories $H$ and $1$. As expected, this
set is still homeomorphic to $U$ under $p$.
\end{proof}
Unlike the case of unlimited histories, this model \textbf{does }admit
CCCs, with those CCCs spanning all $H$ histories. On the other hand,
although there are only a finite number of histories, and not an infinity
of them, there nevertheless exist consistent solutions to otherwise
paradoxical scenarios, as illustrated in Figure \ref{fig:Cyclic-Resolution.}.
In fact, we will prove in Theorem \ref{thm:No-paradoxes-arise} that,
with $C$ colors and $H$ histories, paradoxes are completely avoided
in this scenario if and only if $C\mid H$.

In Section \ref{subsec:Analyzing-Particle-Collisions} we analyze
our model and determine exactly when it resolves time travel paradoxes.
In Section \ref{subsec:Avoiding-Bootstrap-Paradoxes} we discuss a
way to resolve bootstrap paradoxes, and in Section \ref{subsec:Multiple-possibilities}
we explore the general tendency of causality-violating spacetimes
to support multiple consistent solutions.

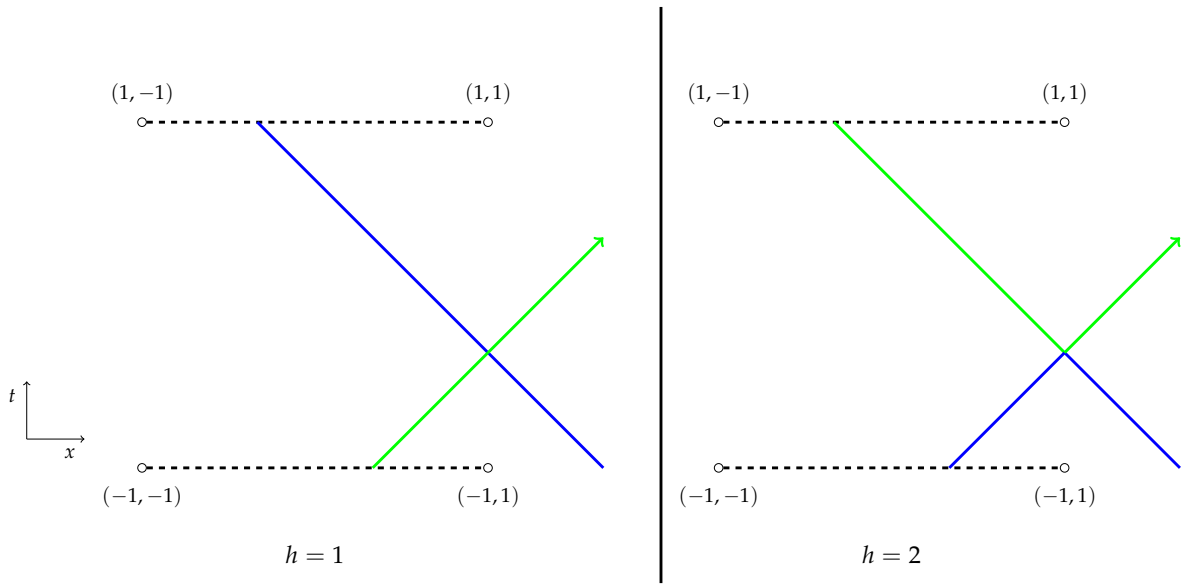
\begin{figure}[!tph]
\centering{}\begin{adjustbox}{width=\textwidth}\begin{tikzpicture}
	\begin{pgfonlayer}{nodelayer}
		\node [style=singularity] (0) at (-9, 3) {};
		\node [style=singularity] (1) at (-3, 3) {};
		\node [style=singularity] (2) at (-9, -3) {};
		\node [style=singularity] (3) at (-3, -3) {};
		\node [style=none] (6) at (-11, -2.5) {};
		\node [style=none] (7) at (-11, -1.5) {};
		\node [style=none] (8) at (-10, -2.5) {};
		\node [style=none] (9) at (-10.25, -2.75) {$x$};
		\node [style=none] (10) at (-11.25, -1.75) {$t$};
		\node [style=none] (11) at (-9, 3.5) {$(1,-1)$};
		\node [style=none] (12) at (-3, 3.5) {$(1,1)$};
		\node [style=none] (13) at (-9, -3.5) {$(-1,-1)$};
		\node [style=none] (14) at (-3, -3.5) {$(-1,1)$};
		\node [style=singularity] (15) at (1, 3) {};
		\node [style=singularity] (16) at (7, 3) {};
		\node [style=singularity] (17) at (1, -3) {};
		\node [style=singularity] (18) at (7, -3) {};
		\node [style=none] (19) at (9, -3) {};
		\node [style=none] (20) at (5, -3) {};
		\node [style=none] (21) at (7, -1) {};
		\node [style=none] (22) at (9, 1) {};
		\node [style=none] (23) at (3, 3) {};
		\node [style=none] (24) at (1, 3.5) {$(1,-1)$};
		\node [style=none] (25) at (7, 3.5) {$(1,1)$};
		\node [style=none] (26) at (1, -3.5) {$(-1,-1)$};
		\node [style=none] (27) at (7, -3.5) {$(-1,1)$};
		\node [style=none] (28) at (0, 5) {};
		\node [style=none] (29) at (0, -5) {};
		\node [style=none] (32) at (4, -4.5) {\large $h=2$};
		\node [style=none] (33) at (-6, -4.5) {\large $h=1$};
		\node [style=none] (34) at (-1, -3) {};
		\node [style=none] (35) at (-5, -3) {};
		\node [style=none] (36) at (-3, -1) {};
		\node [style=none] (37) at (-1, 1) {};
		\node [style=none] (38) at (-7, 3) {};
	\end{pgfonlayer}
	\begin{pgfonlayer}{edgelayer}
		\draw [style=wormhole] (0) to (1);
		\draw [style=wormhole] (2) to (3);
		\draw [style=axis] (6.center) to (7.center);
		\draw [style=axis] (6.center) to (8.center);
		\draw [style=wormhole] (15) to (16);
		\draw [style=wormhole] (17) to (18);
		\draw [style=particle color 1] (19.center) to (21.center);
		\draw [style=particle color 2] (21.center) to (23.center);
		\draw [style=particle color 1] (20.center) to (21.center);
		\draw [style=particle tip color 2] (21.center) to (22.center);
		\draw [style=particle 1] (28.center) to (29.center);
		\draw [style=particle color 1] (34.center) to (36.center);
		\draw [style=particle color 1] (36.center) to (38.center);
		\draw [style=particle color 2] (35.center) to (36.center);
		\draw [style=particle tip color 2] (36.center) to (37.center);
	\end{pgfonlayer}
\end{tikzpicture}\end{adjustbox}\caption{\label{fig:Cyclic-Resolution.}When $C=2$, the consistency paradox
can be solved with two \textbf{cyclic }histories. The blue particle
entering the time machine in $h=1$ comes out of the time machine
in $h=2$, and the green particle entering the time machine in $h=2$
comes out of the time machine back in $h=1$. Since we interpret the
vertices as elastic collisions, we now have a bootstrap paradox; the
particle traveling along the CCC only exists within the CCC itself.
We will discuss how to resolve this in Section \ref{subsec:Avoiding-Bootstrap-Paradoxes}.
Unlike in the scenario of Figure \ref{fig:Infinite-Branching-Solution},
here there is no first history where nothing has come out of the time
machine \textquotedblleft yet\textquotedblright{} (in fact, in Figure
\ref{fig:Infinite-Branching-Solution} the past exit of the time machine
does not even exist for $h=1$).}
\end{figure}

\subsection{\label{subsec:Analyzing-Particle-Collisions}How Many Histories Are
Required to Resolve Paradoxes?}

In this section, we will examine the TDP space in more detail in order
to lay the groundwork for the proof of Theorem \ref{thm:No-paradoxes-arise}.
Although initial conditions defined outside the causality-violating
region $J^{0}(\M')$ cannot uniquely determine the physics inside
this region (as will be demonstrated in Section \ref{subsec:Multiple-possibilities}),
we will show that they do determine the trajectories of all the particles
in this region\footnote{Other than those originating at singular points, as considered by
Krasnikov -- but as noted above, we will ignore this subtlety here.}. Since all particles follow null trajectories and change direction
only in elastic collisions, we can instead think of a set of particle
trajectories as straight null lines intersecting at vertices corresponding
to collisions. For example, in Figure \ref{fig:Cyclic-Resolution.},
the blue path in $h=1$ is considered to be one path whether the original
blue particle continues in the same direction or not after the interaction
at the vertex; this path then continues to $h=2$ and exits to infinity.
\begin{defn}
A \emph{particle path }is a straight null line in $\M'$ composed
of segments from the trajectories of one or more particles.
\end{defn}

Using this notion, we seek to connect particle trajectories throughout
$\M'$ to appropriate initial conditions, and to show that we need
not worry about trajectories varying across different histories or
leading to inconsistencies.
\begin{lem}
\label{lem:TDP-Null-Lines}Particle paths in all histories of $\M'$
are completely determined by initial conditions in the causal past
of the causality-violating region $J^{0}(\M')$.
\end{lem}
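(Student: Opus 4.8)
The plan is to show that the trajectory of every particle path is forced, step by step, by tracing null lines and using the wormhole identifications. First I would fix a spacelike hypersurface $\Sigma$ in the causal past of $J^{0}(\M')$ in each history (for instance the gray line of Figure \ref{fig:causal-structure}, lifted to every copy of Minkowski space), on which the initial data specify the position, direction, and color of every incoming particle. Because each history is (a copy of) the TDP space, which is flat away from the removed singularities, a particle path is a straight null line until it either hits a wormhole mouth $t=+1$ or reaches a collision vertex; at a wormhole mouth the identification $(+1,x,h)\leftrightarrow(-1,-x,h\pm1)$ tells us exactly where and in what direction the path re-emerges (with $x\mapsto-x$, i.e.\ the ``twist''), and in the cyclic model the case $h=H$ wraps to $h=1$. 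So the geometric continuation of any one path is entirely determined, independent of what the particles are doing: a particle path is just a maximal broken null geodesic, and the breaks only occur at intersections with other such lines.

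Next I would argue that the \emph{set} of particle paths is determined. Start from the finitely (or countably) many incoming lines on the $\Sigma$'s; each is a straight null line, and I extend all of them simultaneously through the wormhole identifications as above. Wherever two of these extended null lines cross, that crossing point is a collision vertex; since law~1 forces null trajectories and law~2 says particles only change direction at collisions, no vertices can appear anywhere else, and no particle path can terminate except by escaping to infinity or (in principle) running into a singularity, which we are ignoring. Thus the arrangement of null lines in $\M'$ is the unique arrangement containing the prescribed incoming rays and closed under the wormhole gluing — a purely kinematic object with no freedom left. The key point to make carefully here is that this extension process genuinely closes up: in the finite cyclic model a path may traverse all $H$ wormholes and return, so I should note that the path is still a well-defined null line (a CCC, as anticipated in the text) and that this introduces no ambiguity in its geometry, only possibly in the color bookkeeping, which is deferred to Theorem \ref{thm:No-paradoxes-arise}.

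Finally, I would make explicit that Lemma \ref{lem:TDP-Null-Lines} is a statement about \emph{paths}, i.e.\ the null lines and vertices, not about colors or about which physical particle occupies which segment; the latter is exactly where the consistency question (and the divisibility condition $C\mid H$) enters, and is handled separately. The main obstacle I expect is not any single computation but being precise about the induction: one must order the vertices along each path (say by a global time function on $\M'$, which exists since each history carries the Minkowski $t$-coordinate and the histories are stacked) and verify that ``determined by the data to the past'' really does propagate — that every vertex's location depends only on segments lying to its past, so there is no circularity even though CCCs are present. Phrasing the continuation as: take the arrangement of all maximal broken null geodesics whose past ends lie on the $\Sigma$'s, show existence and uniqueness of such an arrangement by extending one segment at a time in order of increasing time, and observe it is forced at each step, should discharge this cleanly.
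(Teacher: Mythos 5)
There is a genuine gap, and it sits exactly where the paper's proof does its only real work. The paper's argument is a short geometric computation showing that a \emph{straight} null line can traverse the wormhole at most once: if it enters at $(1,x_{0})$ with $|x_{0}|<1$ and exits at $(-1,-x_{0})$, then by the time it returns to the level $t=1$ it is at $|x|=|2-x_{0}|>1$ (or the mirror case), outside the mouth. Hence every particle path is extendible to $t=\pm\infty$ in some histories; in particular it has a past end outside $J^{0}(\M')$, and this is what makes it determined by the initial data. Your proposal never establishes this. Worse, you explicitly allow the opposite: you say a path ``may traverse all $H$ wormholes and return,'' forming a CCC, and claim this causes ``no ambiguity in its geometry, only possibly in the color bookkeeping.'' That is both false in this model (no straight null path re-enters the mouth, by the computation above; the CCCs of Figure \ref{fig:Cyclic-Resolution.} are \emph{trajectories} built from segments of several straight paths, not single paths -- note the paper's definition of a particle path as a straight null line, not a broken one) and, if it were possible, it would defeat the lemma rather than leave it intact: a closed null path disjoint from the initial surface would carry exactly the bootstrap freedom of Figure \ref{fig:Paradoxes}, i.e.\ whether a particle occupies it is \emph{not} fixed by initial conditions, so the set of particle paths would not be determined by the data.

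The mechanism you offer to discharge the circularity worry also fails: you propose to induct ``in order of increasing time'' using ``a global time function on $\M'$, which exists since each history carries the Minkowski $t$-coordinate and the histories are stacked.'' No such global time function exists. The identification sends $t=+1$ in history $h$ to $t=-1$ in history $h+1$, so the per-history $t$'s do not assemble monotonically along causal curves, and in the cyclic case $\M'$ contains CCCs, which rules out any global time function at all. So your induction has no well-founded ordering to run on, and the only way to restore one is precisely the missing fact that each straight path crosses the wormhole at most once and therefore originates at $t=-\infty$ in some history. With that computation inserted, the rest of your outline (extend the incoming rays through the gluing, read off vertices as crossings, defer colors to Theorem \ref{thm:No-paradoxes-arise}) matches the paper's intent; without it, the central claim is assumed rather than proved.
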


\begin{proof}
First, we show that all paths are extendible to $t=\pm\infty$ in
some history. This is certainly true for paths which do not enter
the time machine. As for other paths, they may traverse the wormhole
only once. Indeed, suppose that a path enters the wormhole at $\left(t,x\right)=\left(1,x_{0}\right)$
in one history and exits at $\left(t,x\right)=\left(-1,-x_{0}\right)$
in the next. Without loss of generality, we assume that the path then
moves along the $+x$ direction. Then the path, parametrized by $\lambda\in\BR$,
will be such that 
\[
\left(t,x\right)=\left(\lambda-1,\lambda-x_{0}\right).
\]
The path will reach $t=1$, where the time machine is located, at
$\lambda=2$. However, at this point it will be at $x=2-x_{0}$. The
wormhole is located at $x\in\left(-1,1\right)$, so $x_{0}$ must
be in that range, and in particular $x_{0}<1$. Hence, we see that
we must have $x>1$, and the point of intersection with $t=1$ is
outside of the wormhole. Therefore, a path can never intersect the
wormhole twice.

We conclude that all null lines entering the time machine -- including
the paths of all particles inside $J^{0}(\M')$ -- must originate
at $t=-\infty$ in some history and, upon traversing the wormhole
once, must terminate at $t=+\infty$ in another history. As a consequence,
all particles paths in $\M'$ are determined by initial data in the
causal past of the causality-violating region $J^{0}(\M')$ in some
history.
\end{proof}
\begin{cor}
\label{cor:All-histories-same}Particle paths, and the numbers of
collisions the particles on these paths experience, are the same in
all histories of $\M'$.
\end{cor}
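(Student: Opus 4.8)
The plan is to obtain this immediately from Lemma~\ref{lem:TDP-Null-Lines} together with the ``advance by one history'' symmetry that is built into the construction of $\M'$. A naive approach --- walking each particle path forward from $t=-\infty$ and arguing history-by-history that its collisions are forced --- runs into the circularity of the cyclic identification (the particles a path meets in history $h$ include ones that emerged from the wormhole out of history $h-1$), so instead I would exploit a global symmetry.

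First I would check that the map $\sigma:\M'\to\M'$ defined by $(t,x,h)\mapsto(t,x,h+1)$, with $h$ read modulo $H$ in the finite cyclic case, is a well-defined homeomorphism. The identifications defining $\M'$ all have the form $(+1,x,h)\leftrightarrow(-1,-x,h+1)$, and $\sigma$ carries such a pair to $(+1,x,h+1)\leftrightarrow(-1,-x,h+2)$, again of the required form --- the only point needing care being the wrap-around pair $(+1,x,H)\leftrightarrow(-1,-x,1)$, which $\sigma$ sends to $(+1,x,1)\leftrightarrow(-1,-x,2)$, so the identifications are respected. Moreover $\sigma$ acts as an isometry on each copy of Minkowski space, hence preserves the causal structure of $\M'$ and in particular $J^{0}(\M')$ and its causal past, and it commutes with the physical laws, since masslessness, the elastic-collision rule, and the color evolution rule are all independent of $h$.

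Next I would invoke the defining feature of the extension: every history is identical to the TDP space outside the causal future of $J^{0}(\M')$, so the initial data on a spacelike surface in the causal past of $J^{0}(\M')$ is the same in each history and is therefore fixed by $\sigma$. By Lemma~\ref{lem:TDP-Null-Lines} the entire family of particle paths in $\M'$ is uniquely determined by this initial data. Applying $\sigma$ to that family yields another family of particle paths that obeys the same physical laws and carries the same initial data, so by the uniqueness in the Lemma the two families coincide; that is, the family of particle paths is $\sigma$-invariant. Since $\sigma$ takes history $h$ onto history $h+1$, this says precisely that the null line segments and collision vertices occurring in history $h$ are carried onto those occurring in history $h+1$, so the particle paths agree in all histories; and because the number of collisions a path undergoes is read off geometrically from the path and its intersections with the others, it too is $\sigma$-invariant, giving the second claim.

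I do not anticipate a real obstacle: the argument is entirely (a) verifying that $\sigma$ is a genuine automorphism of $\M'$ compatible with both the identifications and the dynamics --- the wrap-around identification of the cyclic model being the one spot to check --- and (b) reading the ``completely determined'' clause of Lemma~\ref{lem:TDP-Null-Lines} as a uniqueness statement, so that invariance of the initial data transfers to invariance of the solution.
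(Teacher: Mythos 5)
Your proposal is correct, and it rests on the same two ingredients as the paper's proof --- Lemma \ref{lem:TDP-Null-Lines} plus the standing assumption that the initial data is identical in every history --- but it packages them differently. The paper argues directly: since every path either stays in one history or traverses the wormhole exactly once, the paths appearing in history $h$ are fixed by the initial data in histories $h$ and $h-1$, and since that data is the same everywhere, the paths (and hence the collision vertices they generate) coincide in all histories. You instead promote the ``same data in every history'' assumption to invariance under the shift homeomorphism $\sigma:(t,x,h)\mapsto(t,x,h+1)$ (mod $H$), check that $\sigma$ respects the identifications and the $h$-independent dynamics, and then read the ``completely determined'' clause of Lemma \ref{lem:TDP-Null-Lines} as a uniqueness statement to conclude the whole path family is $\sigma$-invariant. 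One remark: the circularity you were guarding against does not actually arise in the direct argument, because particle paths are by definition straight null lines --- collisions only relabel which particle occupies which path, and never alter the lines themselves --- so determining the paths in history $h$ never requires knowing the collision outcomes in history $h-1$, only the (shared) incoming data there. What your symmetry formulation buys is a cleaner global statement (equivariance of the solution under the deck transformation), which transfers verbatim to the $\Z$-indexed covering model; note, though, that it would not literally apply to the branching ($\H=\N$) extension, where $h=1$ lacks a past wormhole mouth and the shift is not an automorphism --- which is harmless here, since the corollary is invoked for the cyclic covering space.
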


\begin{proof}
By assumption, all histories have the \textbf{same }initial data in
the causal past of the causality-violating region $J^{0}(\M')$. In
Lemma \ref{lem:TDP-Null-Lines} we saw that all particle paths in
a history are determined by the initial conditions in that history
and the previous history. Since all such initial conditions are the
same, the particle paths in each history must also be the same. Furthermore,
since these paths fully determine the vertices denoting particle collisions,
the number of collisions the particles on these paths experience is
the same in each history.
\end{proof}
\begin{cor}
\label{cor:Consistent-Positions}The positions of all particles along
CCCs are consistent.
\end{cor}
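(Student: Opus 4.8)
The plan is to derive the corollary directly from Lemma~\ref{lem:TDP-Null-Lines} together with Corollary~\ref{cor:All-histories-same}, treating the positional part of consistency as essentially automatic and isolating the colors as the only genuinely delicate ingredient, which is then dealt with separately in Theorem~\ref{thm:No-paradoxes-arise}. Concretely, I would first recall from Lemma~\ref{lem:TDP-Null-Lines} that every particle path in $\M'$, and in particular every particle path whose image lies partly on a CCC, is fixed by initial data in the causal past of $J^0(\M')$, and that such a path crosses any given wormhole at most once before escaping to $t=+\infty$. This already removes the Polchinski-type feedback in which a particle's future self blocks its own past entry into the time machine: there is no such loop in positions within a single history, so the only way a particle can be ``sent around'' a CCC is by passing successively through the $H$ distinct wormholes of $\M'$.

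Next I would use Corollary~\ref{cor:All-histories-same} to make the $H$ legs of this journey identical. Since all histories carry the same initial data, and hence the same particle paths and the same collision vertices in the coordinates $(t,x)$, the map sending the entry point $(-1,-x_0)$ of a path at the past mouth of history $h$ to the point $(1,x_1)$ at which that path next reaches the future mouth of history $h$ -- after all of its intervening collisions -- is one and the same function for every $h$. The cyclic identification $(+1,x,h)\leftrightarrow(-1,-x,h+1)$, including the wrap-around $h=H\mapsto 1$ which uses exactly the same rule, then glues these $H$ identical legs into a single closed trip through all $H$ histories. Following a CCC around this trip, each identified pair of spacetime points carries the same particle content viewed from either side: the path leaving the future mouth of history $h$ is, by the identification, precisely the path entering the past mouth of history $h+1$, and by Corollary~\ref{cor:All-histories-same} the path structure near the past mouth is the same in every history. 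Hence the particle configuration respects all of the inter-history identifications, and a particle whose worldline traverses a whole CCC returns, after $H$ legs, to the very spacetime point and direction of motion it started from; its position is consistent.

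Finally I would note that this argument says nothing about the \emph{colors} carried along the CCC: the identity map on positions after one full loop is perfectly compatible with a nontrivial net shift of color, with each of the $H$ wormhole traversals contributing, via its intervening collisions, to that shift -- and it is exactly this color bookkeeping that can obstruct a consistent assignment. That obstruction is what Theorem~\ref{thm:No-paradoxes-arise} analyzes, showing it vanishes precisely when $C\mid H$; the present corollary only claims that the positional skeleton of a CCC is never itself paradoxical. The main thing to be careful about is the middle step -- one must be sure that feeding a path through an inter-history identification yields a genuine particle path of the neighboring history rather than something new, so that the ``one-history leg'' is well defined and independent of $h$. But this is exactly the content of Corollary~\ref{cor:All-histories-same}: the histories are indistinguishable in their initial data, hence in their entire particle content. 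The remaining checks -- that a path cannot re-enter the same wormhole, so the closed trip has length exactly $H$ histories -- are the elementary null-line computations already performed in the proof of Lemma~\ref{lem:TDP-Null-Lines}.
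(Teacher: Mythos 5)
Your overall strategy matches the paper's: positions are fixed by initial data via Lemma \ref{lem:TDP-Null-Lines}, are identical in every history by Corollary \ref{cor:All-histories-same}, hence glue consistently across the wormhole identifications (including the wrap-around from history $H$ to history $1$), and the colors are deferred to Theorem \ref{thm:No-paradoxes-arise}. The paper's own proof is even terser: every particle path extends to $t=\pm\infty$ in some history, so nothing appears or disappears in a positionally inconsistent way, and since only two null paths meet at any vertex, every interaction is a legal collision. One terminological slip to fix: by the paper's definition a \emph{particle path} is a straight null line, and the content of Lemma \ref{lem:TDP-Null-Lines} is precisely that such a line, having emerged from the past mouth, never reaches the future mouth of the same history. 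What re-enters the wormhole after ``intervening collisions'' is a particle trajectory composed of segments of several paths, so your ``one-history leg map'' must be phrased for particles (in the elastic-collision interpretation), not for paths.

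There is also one step that is simply false as stated: the claim that a particle on a CCC ``returns, after $H$ legs, to the very spacetime point and direction of motion it started from.'' Each wormhole traversal reverses the spatial ordering of the confined particles, sending particle $k$ to position $m+n-k+1$; this reversal is an involution, so after an \emph{odd} number of traversals a confined particle (other than the middle one when $m+n$ is odd) arrives at the emergence point of its mirror partner, not its own, and its worldline closes only after $2H$ traversals. This is exactly what the paper's odd-$H$ constraint, Equation \eqref{eq:oddConstraint}, encodes by relating $y_{k}$ to $y_{m+n-k+1}$ rather than to $y_{k}$ itself. Fortunately the corollary does not need closure after exactly $H$ legs: positional consistency only requires that the worldline segments respect all identifications and that every vertex is a valid two-particle interaction, which your gluing argument (and the paper's two-sentence version) already provides. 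So the offending sentence should be dropped or weakened to ``its worldline closes after finitely many traversals,'' after which the proof is sound.
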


\begin{proof}
Since all particle paths are extendible to $t=\pm\infty$ in some
history, no particles appear or disappear in a paradoxical way. Furthermore,
since only two null paths can meet at each vertex, there are no particle
interactions inconsistent with the laws of physics in this system.
\end{proof}
All that remains in order to prove the absence of paradoxes is to
demonstrate that the \textbf{colors }of particles along CCCs are consistent
as well. To do that, we first prove the following lemma:
\begin{lem}
\label{lem:Only-m-n}The color evolution of particles in $J^{0}(\M')$
is determined entirely by the choice of $m$ and $n$ -- the number
of particles entering from the left ($-x$) and the right ($+x$),
respectively.
\end{lem}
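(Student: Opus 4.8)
The plan is to reduce the assertion to a purely combinatorial statement about straight null lines and then read it off from the results already established. The key point is that a particle's color at any point of its worldline equals its initial color plus the number of collisions it has already undergone, taken $\bmod\,C$, and that trajectories -- hence the whole pattern of collisions -- do not depend on colors at all. So it suffices to prove that the \emph{collision structure} inside $J^{0}(\M')$, meaning which worldlines are present, which pairs of them collide, and the order of collisions along each, is fixed once $m$ and $n$ are chosen, independently of the precise initial positions of the incoming particles.

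First I would identify the worldlines. In $1+1$ dimensions every null segment is either left-moving or right-moving, two segments of the same type are parallel and never meet, and inside a bounded region a left-mover and a right-mover meet at most once; so all collisions are between opposite types, and each such pair contributes at most one collision. By Lemma~\ref{lem:TDP-Null-Lines} every worldline crosses a wormhole mouth at most once, and the identification $(+1,x,h)\leftrightarrow(-1,-x,h+1)$ is a fixed map: a segment entering a future mouth re-emerges as an opposite-type segment whose trajectory is completely determined, up to the label shift, by where it entered. Hence the worldlines present in a given history are the incoming particles of that history not yet at a mouth, together with the wormhole images of those that entered a mouth in the previous history; unwinding this recursion and invoking Corollary~\ref{cor:All-histories-same} (all histories carry identical data), the number of segments of each type, and therefore the bipartite ``who-meets-whom'' data, is a function of $m$ and $n$ alone.

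Next I would show that the crossing pattern itself is position-independent. The space of admissible initial configurations with $m$ left-incoming and $n$ right-incoming particles is connected, and the degenerate configurations -- three worldlines through one point, or a worldline running exactly into a singularity or a mouth endpoint -- form a closed nowhere-dense subset whose complement is still connected. On that complement the collision structure is locally constant, since it can only change when an endpoint of one segment sweeps through another segment, which is precisely a degenerate configuration; being locally constant on a connected set, it is constant. Therefore the collision structure, and with it the map sending initial colors to later colors for every particle in $J^{0}(\M')$, depends only on $m$ and $n$.

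The step I expect to be the real obstacle is this last one, and within it the most delicate issue is controlling \emph{whether} a given incoming particle ever reaches a wormhole mouth: a gently aimed right-mover enters a future mouth while a steeply aimed one leaves $J^{0}(\M')$ through a side, and this affects the number of worldlines present. The resolution is that the explicit shape of $J^{0}(\M')$ together with the null dynamics already fixes, for each incoming direction, which boundary component of $J^{0}(\M')$ the worldline exits through -- so this too is constant on the connected configuration space -- after which one must carry out the bookkeeping that matches worldlines and collisions across the wormhole gluing. Since only generic initial data is needed for the paradox analysis, one may alternatively exclude the degenerate configurations from the outset and avoid part of this work.
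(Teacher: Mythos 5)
Your reduction to the collision structure (colors advance by $+1$ per collision independently of the partner's color, and trajectories are color-blind) and your worldline inventory via Lemma \ref{lem:TDP-Null-Lines} and Corollary \ref{cor:All-histories-same} match the paper's strategy. The genuine gap is in your third step. The space of initial configurations is an open box of null-line offsets (all particles are massless, so every trajectory has slope $\pm1$; there is no ``gently'' or ``steeply'' aimed particle, only the intercept varies), and your degenerate set consists of codimension-one walls in that box -- e.g.\ two same-type particles sharing a null line, or a line passing exactly through a mouth endpoint. Removing such walls generically \emph{disconnects} the box into chambers, so ``locally constant on a connected set, hence constant'' does not apply. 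Worse, constancy across chambers is simply false for exactly the walls you flag: whether an incoming particle's null line crosses $t=+1$ inside the mouth ($|x|<1$) or outside it depends only on its offset, and with $m=0$, $n=1$ one choice yields no collisions at all in $J^{0}(\M')$ while the other yields the full paradox structure. So your proposed fix -- that the exit boundary component ``is constant on the connected configuration space'' -- cannot work; the issue must be handled by hypothesis, not by topology. The paper's proof implicitly assumes that every one of the $m+n$ incoming paths approaches $J^{0}(\M')$ and traverses the wormhole, i.e.\ $m$ and $n$ count the particles that actually enter the causality-violating region.

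Once that is in place, no genericity or connectedness argument is needed, and the paper's route is the elementary one you should substitute for your step three: since every collision adds $+1$ mod $C$ to each participant regardless of the other particle's color, only the \emph{number} of collisions each particle suffers matters -- not even the temporal ordering of vertices; in $1+1$ dimensions opposite-type null lines cross exactly once, the spatial ordering of particles is preserved in time except for the flip at the wormhole, and each incoming path threads the mouth exactly once and leaves toward the side it came from (Lemma \ref{lem:TDP-Null-Lines} plus the twist). Hence the incoming particles are scattered back out without traversing the wormhole, the remaining particles are confined to $J^{0}(\M')$, and their color changes are counted by their collisions with the incoming particles -- counts fixed by $m$ and $n$ alone, and made quantitative later in Lemma \ref{lem:Group-collision-evolution}.
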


\begin{proof}
Initially, $m+n$ particles of various colors approach $J^{0}(\M')$
along various trajectories. To show that only $m$ and $n$ affect
the color evolution of particles in this region, we must show that
neither the positions of the trajectories nor the initial colors impact
this evolution.

The initial positions of the particles impact the positions of particles
in $J^{0}(\M')$, but \textbf{not} the ordering of vertices -- which
completely determine how colors change, since colors are constant
outside of the vertices. Since each null path approaching $J^{0}(\M')$
traverses the wormhole and then leaves in the same direction that
it came from (due to the twist at the wormhole), it is apparent that
$m$ particles must leave $J^{0}(\M')$ in the $-x$ direction and
$n$ particles must leave in the $+x$ direction.

Since we are considering a spacetime with one spatial dimension, the
spatial ordering of a set of particles or paths cannot change over
time, except when it is flipped passing through the wormhole. Thus,
the particles which leave $J^{0}(\M')$ must be the same particles
as those which enter it in the first place. Consequently, the remaining
particles are \textbf{confined }to $J^{0}(\M')$, and their colors
are impacted only by the number of collisions they have with the incoming
particles, not by the incoming particles' colors.
\end{proof}
Lastly, we build some machinery for analyzing the collisions of arbitrary
numbers of particles.
\begin{defn}
A \textbf{$(p,q)$ }particle\textbf{ }\emph{group collision} is a
set of individual particle collisions arising from the scattering
of $p$ particles approaching from the left ($-x$) and $q$ particles
from the right ($+x$), as in Figure \ref{fig:A-collision-of}.
\end{defn}

\begin{lem}
\label{lem:Group-collision-evolution}Let $x_{k}$ be the color of
particle number $k$ counting from the left in a $(p,q)$ particle
group collision where $p<q$. Then, after the collision, the particles'
colors are given by: 

\[
x_{k}'=x_{k}+\begin{cases}
2(k-1)+1 & k\leq p,\\
2p & p<k\leq q,\\
2(p+q-k)+1 & k>q.
\end{cases}
\]
\end{lem}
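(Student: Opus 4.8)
The plan is to track how a single null path behaves as it moves through a $(p,q)$ group collision and to count the collisions it undergoes, then translate that collision count into a color shift via the rule ``color increases by $1\pmod C$ per collision.'' Since all particles are massless and travel on null lines, in $1+1$ dimensions the $p$ left-movers and $q$ right-movers each form a parallel family, so every left-mover meets every right-mover exactly once inside the group collision and there are no other intersections. Hence the only thing to determine is, for the path occupying slot $k$ (counting from the left among the $p+q$ incoming particles), how many of the ``crossing'' intersections lie on that path — because each such intersection is one collision, contributing $+1$ to the color.

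The key combinatorial step is to set up the bookkeeping consistently with the convention that the spatial ordering of a family of comoving particles is preserved. First I would fix which incoming slots are left-movers and which are right-movers; the natural assignment (and the one forced by wanting the particles to actually enter the interaction region) is that, reading from the left, one alternates or more precisely the $p$ leftmost-entering-from-the-left worldlines and the $q$ entering-from-the-right interleave in a way determined by their entry positions, but the statement's indexing ``counting from the left'' refers to the order at the moment just before the collisions begin. I would then argue: a path in slot $k$ with $k \le p$ is one of the first $p$ (a left-moving-origin path, say) and crosses exactly $k-1$ other same-side paths' partners plus... — more cleanly, I would count directly: the number of collisions on path $k$ equals the number of opposite-type paths it must cross to exit, which for the three regimes $k\le p$, $p<k\le q$, $k>q$ gives $2(k-1)+1$, $2p$, and $2(p+q-k)+1$ respectively. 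The cleanest route is induction on $p+q$, or better, a direct geometric count: draw the $p+q$ lines, observe that path $k$ for $k\le p$ collides with $k-1$ paths ``behind'' it and $k$ paths... — I would nail the exact tallies by checking the small cases $(1,1)$, $(1,2)$, $(2,2)$ against the formula and then give the general crossing-number argument.

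The main obstacle I anticipate is pinning down the indexing convention precisely enough that the three cases come out with the stated constants, rather than off-by-one or off-by-a-factor-of-two versions. In particular I must be careful that ``$x_k$ is the color of particle number $k$ counting from the left'' refers to the pre-collision spatial order, that passing through the group collision reverses this order for the confined paths but not globally, and that the $p < q$ hypothesis is what makes the middle regime $p < k \le q$ nonempty and gives it the flat value $2p$ (each such path, being ``in the middle'', collides with all $p$ opposite-movers on each side — hence $2p$ — while the outer paths on the crowded side lose some collisions symmetrically, explaining the $2(p+q-k)+1$ tail mirroring the $2(k-1)+1$ head). I would therefore organize the proof as: (i) reduce to counting intersection points on each line using Lemma \ref{lem:Only-m-n} and the null-line picture; (ii) fix the ordering convention and identify the type (left/right mover) of slot $k$; (iii) count intersections in each of the three regimes, verifying consistency at the boundaries $k=p$ and $k=q$; (iv) conclude $x_k' = x_k + (\text{number of collisions on path } k)$, which is the claimed formula. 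The genuinely nontrivial content is step (iii), and I expect a clean two-line argument once the picture is drawn, with the remaining work being purely careful case-checking.
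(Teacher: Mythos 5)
Your proposal has a genuine gap, and it starts from the wrong bookkeeping. You reduce the problem to counting ``how many of the crossing intersections lie on the path occupying slot $k$,'' treating that object as a straight null line. But every straight left-moving line crosses all $q$ right-moving lines and every right-moving line crosses all $p$ left-moving lines, so counting intersections on straight paths gives a shift of $q$ or $p$ independent of $k$ -- it cannot produce the $k$-dependent formula (the totals agree, $2pq$ collision-participations either way, but the per-slot distribution does not). The quantity the lemma needs is the number of vertices on the worldline of the \emph{$k$-th particle in spatial order}, which under the elastic-collision reading is a zigzag, not a straight line; ``the number of opposite-type paths it must cross to exit'' is simply not a correct description of that count. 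Your later heuristic (``middle particles collide with all $p$ opposite-movers on each side'') gestures at the right picture but is never turned into an argument: your step (iii), which you yourself identify as the entire nontrivial content, is left as ``check $(1,1)$, $(1,2)$, $(2,2)$ and find a clean two-line argument once the picture is drawn.'' Restating the three case values and promising to verify small cases is not a proof of the general formula.

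For comparison, the paper closes exactly this gap with a short lattice argument: label the $p$ left lines and $q$ right lines, so the $p\times q$ vertices are pairs $(a,b)$; because spatial order is preserved, consecutive collisions of any given particle occur at vertices differing by $+1$ in exactly one coordinate, so if particle $k$ first collides at $(a_{i},b_{i})$ and last collides at $(a_{f},b_{f})$ its total number of collisions is $(a_{f}-a_{i})+(b_{f}-b_{i})+1$, independent of the path taken through the lattice. One then reads off the first vertex ($(p-k+1,1)$ for $k\leq p$, $(1,k-p)$ for $k>p$) and the last vertex ($(p,k)$ for $k\leq q$, $(p+q-k+1,q)$ for $k>q$), and the three cases of the formula follow by arithmetic. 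If you want to salvage your outline, replacing your step (iii) with this monotone-lattice-walk count (or an induction on $p+q$ that you would have to actually carry out, including the boundary cases $k=p$, $k=q$) is what is required.
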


\begin{proof}
Since $p$ lines cross with $q$ lines in one of these collisions,
$p\times q$ vertices arise, as illustrated in Figure \ref{fig:A-collision-of}.
If we assign each line a number, as in the figure, then it is straightforward
to label the vertices with tuples of these numbers. Since the spatial
ordering of the particles is constant we can determine the first and
last collisions that each particle will participate in.

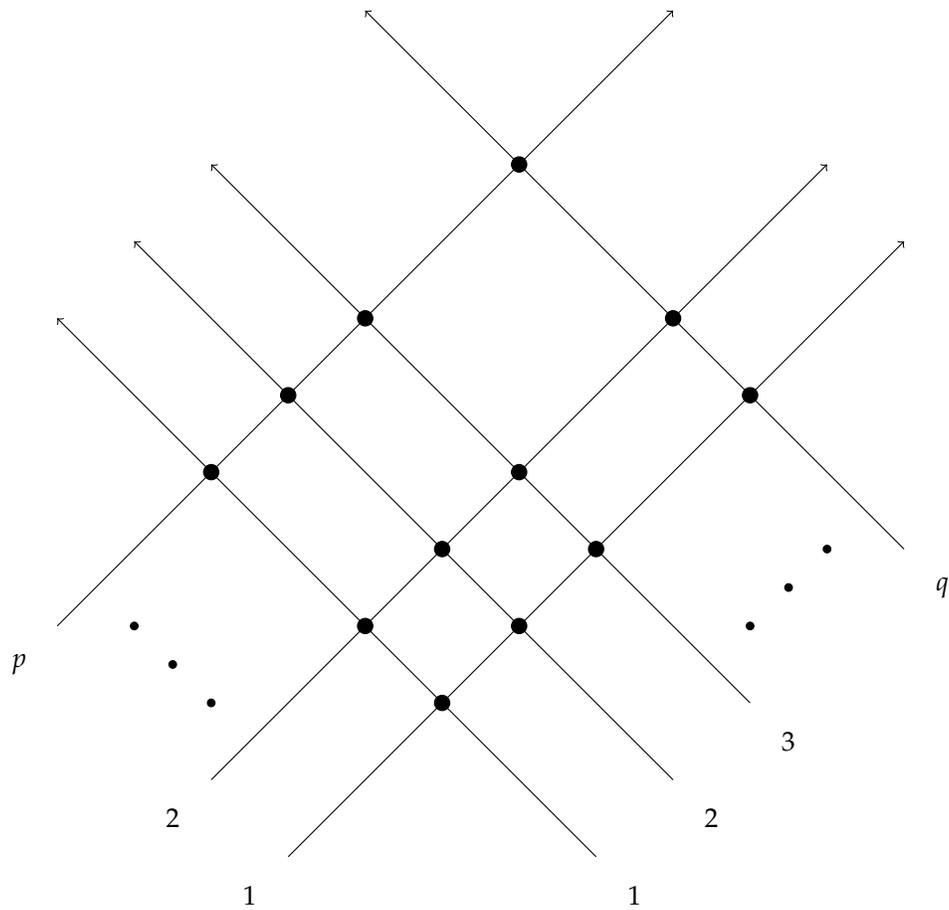
\begin{figure}[!tph]
\begin{centering}
\begin{adjustbox}{width=0.8\textwidth}\begin{tikzpicture}
	\begin{pgfonlayer}{nodelayer}
		\node [style=none] (0) at (1, -1) {};
		\node [style=none] (1) at (2, 0) {};
		\node [style=none] (2) at (3, 1) {};
		\node [style=none] (3) at (5, 3) {};
		\node [style=none] (4) at (3, -3) {};
		\node [style=none] (5) at (4, -2) {};
		\node [style=none] (6) at (5, -1) {};
		\node [style=none] (7) at (7, 1) {};
		\node [style=none] (8) at (-1, -3) {};
		\node [style=none] (9) at (-2, -2) {};
		\node [style=none] (10) at (-4, 0) {};
		\node [style=none] (11) at (0, 0) {};
		\node [style=none] (12) at (1, 1) {};
		\node [style=none] (13) at (2, 2) {};
		\node [style=none] (14) at (-2, 2) {};
		\node [style=none] (15) at (-1, 3) {};
		\node [style=none] (16) at (0, 4) {};
		\node [style=none] (17) at (-4, 4) {};
		\node [style=none] (18) at (-3, 5) {};
		\node [style=none] (19) at (-2, 6) {};
		\node [style=none] (20) at (2, 6) {};
		\node [style=none] (21) at (0, 8) {};
		\node [style=none] (22) at (7, 5) {};
		\node [style=none] (23) at (4, 8) {};
		\node [style=none] (24) at (4, 4) {};
		\node [style=none] (25) at (6, 6) {};
		\node [style=none] (26) at (4, -2) {};
		\node [style=none] (27) at (3.5, -3.5) {};
		\node [style=none] (28) at (3.5, -3.5) {$1$};
		\node [style=none] (29) at (4.5, -2.5) {$2$};
		\node [style=none] (30) at (5.5, -1.5) {$3$};
		\node [style=none] (31) at (7.5, 0.5) {$q$};
		\node [style=none] (32) at (-1.5, -3.5) {$1$};
		\node [style=none] (33) at (-2.5, -2.5) {$2$};
		\node [style=none] (34) at (-4.5, -0.5) {};
		\node [style=none] (35) at (-4.5, -0.5) {$p$};
		\node [style=small black dot] (36) at (5, 0) {};
		\node [style=small black dot] (37) at (5.5, 0.5) {};
		\node [style=small black dot] (38) at (6, 1) {};
		\node [style=small black dot] (39) at (-2, -1) {};
		\node [style=small black dot] (40) at (-2.5, -0.5) {};
		\node [style=small black dot] (41) at (-3, 0) {};
		\node [style=black dot] (44) at (2, 6) {};
		\node [style=black dot] (45) at (4, 4) {};
		\node [style=black dot] (46) at (5, 3) {};
		\node [style=black dot] (47) at (0, 4) {};
		\node [style=black dot] (48) at (2, 2) {};
		\node [style=black dot] (49) at (3, 1) {};
		\node [style=black dot] (50) at (2, 0) {};
		\node [style=black dot] (51) at (1, -1) {};
		\node [style=black dot] (52) at (1, 1) {};
		\node [style=black dot] (53) at (0, 0) {};
		\node [style=black dot] (54) at (-2, 2) {};
		\node [style=black dot] (55) at (-1, 3) {};
	\end{pgfonlayer}
	\begin{pgfonlayer}{edgelayer}
		\draw [style=axis] (8.center) to (22.center);
		\draw [style=axis] (9.center) to (25.center);
		\draw [style=axis] (10.center) to (23.center);
		\draw [style=axis] (4.center) to (17.center);
		\draw [style=axis] (26.center) to (18.center);
		\draw [style=axis] (6.center) to (19.center);
		\draw [style=axis] (7.center) to (21.center);
	\end{pgfonlayer}
\end{tikzpicture}\end{adjustbox}
\par\end{centering}
\caption{\label{fig:A-collision-of}A collision of $p$ particles from the
left and $q$ particles from the right.}
\end{figure}

In order to determine how particle colors change in one group collision,
we assign each particle non-unique initial and final vertices corresponding
to the first and last collisions that they participate in. As particles
traverse the group collision between their initial and final vertices,
they always travel from some vertex $(a,b)$ to one of two adjacent
vertices: $(a+1,b)$ or $(a,b+1)$. Thus, if a particle enters the
group collision at $(a_{i},b_{i})$ and leaves at $(a_{f},b_{f})$,
the total number of collisions that it participates in is $(a_{f}-a_{i})+(b_{f}-b_{i})+1$,
where the $+1$ accounts for the initial vertex.

Particle number $k$ first collides at $(p-k+1,1)$ if $k\leq p$
or at $(1,k-p)$ if $k>p$, and it last collides at $(p,k)$ if $k\leq q$
or at $(p+q-k+1,q)$ if $k>q$. Thus, over the course of a group collision,

\begin{equation}
x_{k}'=x_{k}+\begin{cases}
2(k-1)+1 & k\leq p,\\
2p & p<k\leq q,\\
2(p+q-k)+1 & k>q.
\end{cases}\label{eq:collision1}
\end{equation}

Note the special case where $p=q$, and 

\begin{equation}
x_{k}'=x_{k}+\begin{cases}
2(k-1)+1 & k\leq p,\\
2(2p-k)+1 & k>p.
\end{cases}\label{eq:collision2}
\end{equation}
\end{proof}
Now, having built this machinery, we proceed to the main result of
this chapter.
\begin{thm}
\label{thm:No-paradoxes-arise}No paradoxes arise in a cyclic history
extension of the TDP space with $C$ colors and $H$ histories if
and only if $C\mid H$.
\end{thm}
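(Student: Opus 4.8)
The plan is to reduce the whole statement to a single divisibility condition. By Lemma \ref{lem:TDP-Null-Lines} and Corollaries \ref{cor:All-histories-same} and \ref{cor:Consistent-Positions}, everything about the forced solution in the cyclic extension $\M'$ is already pinned down by the initial data except the colors: all particle paths exist, extend to $t=\pm\infty$ in some history, close up consistently in position, and the vertex structure is identical in every history. Moreover, by Lemma \ref{lem:Only-m-n} the color evolution inside $J^0(\M')$ depends only on $m$ and $n$. Hence the only way a consistency paradox can occur is that a particle path running along a closed causal curve admits no globally consistent color, and the task is to determine exactly when this happens.

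Next I would analyze the structure of a CCC particle path. A path that enters the time machine in history $h$ re-emerges in history $h+1$ (indices mod $H$), and by the argument of Lemma \ref{lem:TDP-Null-Lines} it cannot re-enter the same wormhole twice within one history; so a path confined to a CCC sojourns through the histories in the cyclic order $h,h+1,h+2,\dots$ and closes only after exactly $H$ wormhole traversals, one per history. By Corollary \ref{cor:All-histories-same} the geometry — and in particular the number of collisions the path undergoes — is the same in each history; call this common value $N$ (it is purely geometric, since collisions are determined by worldline intersections and not by colors, and by Lemma \ref{lem:Only-m-n} it is a function of $m$ and $n$ only; in the general case it can be read off from the group-collision analysis of Lemma \ref{lem:Group-collision-evolution}). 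Since each collision raises a particle's color by $1\pmod C$, traversing the full CCC raises the confined particle's color by $HN\pmod C$. Therefore the path can be colored consistently if and only if $C\mid HN$; when this holds one may start the confined particle with any color, and the remaining (incoming) particles impose no constraint since they enter from $t=-\infty$ and escape to $t=+\infty$ with prescribed colors.

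For the ``if'' direction, suppose $C\mid H$. Then $C\mid HN$ for every value of $N$, so by the previous paragraph every confined particle on every CCC can be colored consistently for every choice of $m,n$, and hence no consistency paradox arises. For the ``only if'' direction I would argue the contrapositive: assume $C\nmid H$ and exhibit a paradoxical initial condition, namely the single incoming particle of Figure \ref{fig:Paradoxes} whose worldline is aimed into the time machine. Lifted to $\M'$, the positions are forced by Corollary \ref{cor:Consistent-Positions} and produce a CCC along which one confined particle collides with that history's incoming particle exactly once per history — at least once, because a particle emerging from the past wormhole mouth cannot reach the future mouth undeflected, and at most once, because in one spatial dimension the two worldlines meet at most once. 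Thus $N=1$, the confined particle's color must change by $H$ around the loop, and since $C\nmid H$ no consistent color exists: a consistency paradox. Combining the two directions gives the theorem.

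The main obstacle — and the step that needs the most care — is the second paragraph: rigorously establishing that a CCC path closes after exactly $H$ traversals with a well-defined common per-history collision number $N$, and that ``$C\mid HN$ for all relevant paths'' is genuinely equivalent to the absence of paradoxes (in particular, that distinct confined paths impose no further coupled constraints, and that in the one-particle configuration no alternative solution dodges the CCC). Once this bookkeeping is secured, the number-theoretic conclusion is immediate.
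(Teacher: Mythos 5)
Your reduction to colors (via Lemma \ref{lem:TDP-Null-Lines}, Corollaries \ref{cor:All-histories-same} and \ref{cor:Consistent-Positions}, and Lemma \ref{lem:Only-m-n}) and your counterexample for the ``only if'' direction ($m=0$, $n=1$, one collision per history, forcing $C\mid H$) match the paper. The gap is in the ``if'' direction, and it sits exactly where you flag ``the main obstacle'': your central claim that a confined particle undergoes a \emph{common} number $N$ of collisions in every history, and that its worldline closes after exactly $H$ wormhole traversals, is false in general. Corollary \ref{cor:All-histories-same} only says the collision \emph{pattern} is the same in each history; because the wormhole twist reverses spatial order, the particle occupying position $k$ in one history becomes particle $m+n-k+1$ in the next, so its per-history collision count alternates between two different values (e.g.\ $2m$ in one history and $2n$ in the next, as in Equation (\ref{eq:one-hist-evolution})). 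There is no single $N$, and the total color shift around a loop is not ``$HN$''; it only comes out as $H(m+n)$ after the pairing argument showing each confined particle gains $2(m+n)$ over two consecutive histories. Moreover, for odd $H$ a non-middle confined worldline does not close after $H$ traversals at all: it returns to the starting history as the mirrored particle, closing only after $2H$ traversals, so the consistency conditions couple the colors $y_k$ and $y_{m+n-k+1}$ in pairs (Equation (\ref{eq:oddConstraint})), and one must check solvability of that coupled system (the pair conditions $2H(m+n)\equiv 0$ and the middle-particle condition $H(m+n)\equiv 0$ modulo $C$), not merely a per-particle divisibility.

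So the conclusion you reach is correct, but the proof as written does not establish sufficiency: the quantitative bookkeeping you defer -- the zone-by-zone group-collision computation (Lemma \ref{lem:Group-collision-evolution} applied to zones I, II, III), the two-history pairing, and the separate even/odd $H$ constraint analysis -- is precisely the content of the paper's proof, and replacing it with the ``common $N$, closes after $H$ traversals'' shortcut is not a repair but the missing argument itself.
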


\begin{proof}
To show the absence of paradoxes for \textbf{all} initial conditions,
we must fully characterize how these initial conditions evolve in
order to derive consistency constraints for particles traveling along
CCCs. The absence of a paradox is equivalent to the particles traveling
along CCCs satisfying these constraints.

According to Corollary \ref{cor:Consistent-Positions}, positions
along all particle trajectories are consistent. Thus, we need only
show that the \textbf{colors }along these trajectories are consistent
as well. This analysis is greatly simplified by Lemma \ref{lem:Only-m-n},
which ensures that the only variables we need to consider when analyzing
the color evolution of particles in $J^{0}(\M')$ are $m$ and $n$,
the number of particles entering from the left ($-x$) and the right
($+x$) respectively. As noted in the proof of Lemma \ref{lem:Only-m-n},
the only particles traversing the time machine -- and thus the only
particles traveling along CCCs -- are confined exclusively to $J^{0}(\M')$,
so the relevant initial conditions for deriving consistency constraints
are entirely specified by $m$ and $n$.

We can more easily determine how the colors of these particles evolve
over the course of one history by identifying three zones in $J^{0}(\M')$
where there are group collisions. These zones are illustrated in Figure
\ref{fig:zones}. According to Corollary \ref{cor:All-histories-same},
the structure of the collisions is the same in each history, so characterizing
the evolution of particle colors in one history allows us to determine
this more broadly over $J^{0}(\M')$.

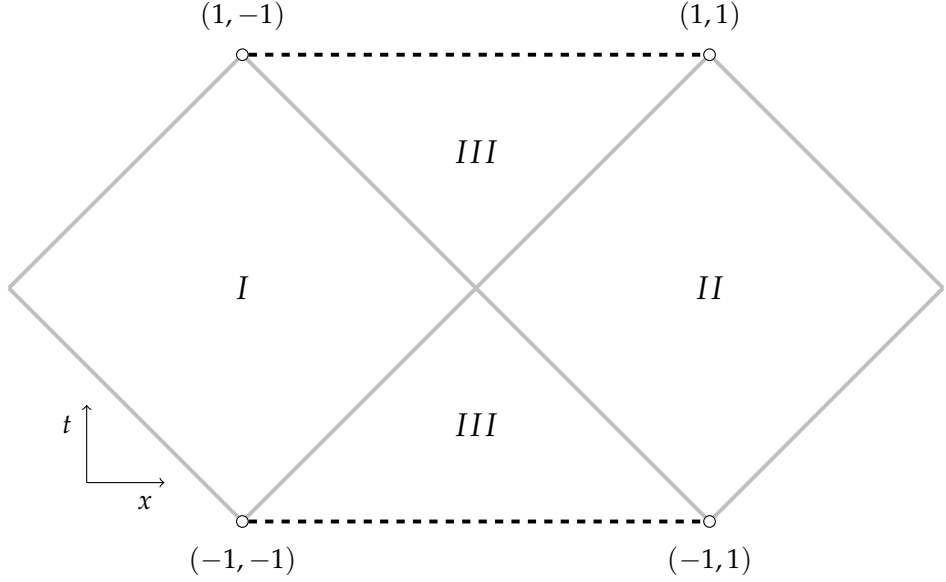
\begin{figure}[!tph]
\centering{}\begin{adjustbox}{width=0.8\textwidth}\begin{tikzpicture}
	\begin{pgfonlayer}{nodelayer}
		\node [style=singularity] (0) at (-3, 3) {};
		\node [style=singularity] (1) at (3, 3) {};
		\node [style=singularity] (2) at (-3, -3) {};
		\node [style=singularity] (3) at (3, -3) {};
		\node [style=none] (4) at (-5, -2.5) {};
		\node [style=none] (5) at (-5, -1.5) {};
		\node [style=none] (6) at (-4, -2.5) {};
		\node [style=none] (7) at (-4.25, -2.75) {$x$};
		\node [style=none] (8) at (-5.25, -1.75) {$t$};
		\node [style=none] (9) at (-3, 3.5) {$(1,-1)$};
		\node [style=none] (10) at (3, 3.5) {$(1,1)$};
		\node [style=none] (11) at (-3, -3.5) {$(-1,-1)$};
		\node [style=none] (12) at (3, -3.5) {$(-1,1)$};
		\node [style=none] (13) at (6, 0) {};
		\node [style=none] (14) at (-6, 0) {};
		\node [style=none] (15) at (0, 0) {};
		\node [style=none] (16) at (0, -1.75) {\large $III$};
		\node [style=none] (17) at (0, 1.75) {};
		\node [style=none] (18) at (3, 0) {\large $II$};
		\node [style=none] (19) at (-3, 0) {\large $I$};
		\node [style=none] (20) at (0, 1.75) {\large $III$};
	\end{pgfonlayer}
	\begin{pgfonlayer}{edgelayer}
		\draw [style=wormhole] (0) to (1);
		\draw [style=wormhole] (2) to (3);
		\draw [style=axis] (4.center) to (5.center);
		\draw [style=axis] (4.center) to (6.center);
		\draw [style=particle 2] (14.center) to (2);
		\draw [style=particle 2] (2) to (15.center);
		\draw [style=particle 2] (15.center) to (0);
		\draw [style=particle 2] (0) to (14.center);
		\draw [style=particle 2] (15.center) to (3);
		\draw [style=particle 2] (13.center) to (3);
		\draw [style=particle 2] (13.center) to (1);
		\draw [style=particle 2] (1) to (15.center);
	\end{pgfonlayer}
\end{tikzpicture}\end{adjustbox}\caption{\label{fig:zones}A single history's causality-violating region can
be partitioned into three zones, each of which contains a group collision
of particles.}
\end{figure}

Since all the incoming particles collide such that they scatter away
from $J^{0}(\M')$ without ever traversing the wormhole, particles
approaching from $-x$ may participate only in collisions in zone
I and particles approaching from $+x$ may participate only in collisions
in zone II. Thus, $m$ particles leave zone III going in the $-x$
direction, participate in a group collision in zone I, and are scattered
back into zone III; similarly, $n$ particles leave zone III going
in the $+x$ direction, participate in a group collision in zone II,
and are scattered back into zone III. These $m+n$ particles are those
which follow CCCs, imposing consistency constraints that must be satisfied
to produce a legitimate solution to a given set of initial conditions.

The particles following CCCs collide in a group collision in zone
III. However, it is not initially clear that the structure of this
collision is the same as that of those in zones I and II: the group
collision is interrupted by a wormhole that flips the spatial ordering
of the particles. Nevertheless, we can treat this collision in the
same way as the others. This can be visualized by stacking a copy
of a single history's time machine region, flipped in $x$, on top
of itself, as shown in Figure \ref{fig:stacked}. This construction
illustrates that the zone III collision takes the same form as the
others, and we do not have to account for the wormhole's spatial flip
until after the collision, if we consider first the group collisions
in zones I and II, and then the zone III collision overlapping one
history and the next. 

\begin{figure}[!tph]
\begin{centering}
\begin{adjustbox}{width=0.66\textwidth}\begin{tikzpicture}
	\begin{pgfonlayer}{nodelayer}
		\node [style=singularity] (0) at (-9, 3) {};
		\node [style=singularity] (1) at (-3, 3) {};
		\node [style=singularity] (2) at (-9, -3) {};
		\node [style=singularity] (3) at (-3, -3) {};
		\node [style=none] (6) at (-11, -2.5) {};
		\node [style=none] (7) at (-11, -1.5) {};
		\node [style=none] (8) at (-10, -2.5) {};
		\node [style=none] (9) at (-10.25, -2.75) {$x$};
		\node [style=none] (10) at (-11.25, -1.75) {$t$};
		\node [style=singularity] (15) at (-9, 9) {};
		\node [style=singularity] (16) at (-3, 9) {};
		\node [style=none] (19) at (-11, 3) {};
		\node [style=none] (20) at (-7, 3) {};
		\node [style=none] (21) at (-9, 5) {};
		\node [style=none] (22) at (-11, 7) {};
		\node [style=none] (23) at (-5, 9) {};
		\node [style=none] (32) at (-1.5, 6) {\large $h=2$};
		\node [style=none] (33) at (-10.5, 0) {\large $h=1$};
		\node [style=none] (34) at (-1, -3) {};
		\node [style=none] (35) at (-5, -3) {};
		\node [style=none] (36) at (-3, -1) {};
		\node [style=none] (37) at (-1, 1) {};
		\node [style=none] (38) at (-7, 3) {};
		\node [style=none] (39) at (-1, 3.5) {};
		\node [style=none] (40) at (-1, 4.5) {};
		\node [style=none] (41) at (-2, 3.5) {};
		\node [style=none] (42) at (-1.75, 3.25) {$x$};
		\node [style=none] (43) at (-1.25, 4.25) {$t$};
	\end{pgfonlayer}
	\begin{pgfonlayer}{edgelayer}
		\draw [style=wormhole] (0) to (1);
		\draw [style=wormhole] (2) to (3);
		\draw [style=axis] (6.center) to (7.center);
		\draw [style=axis] (6.center) to (8.center);
		\draw [style=wormhole] (15) to (16);
		\draw [style=particle color 1] (19.center) to (21.center);
		\draw [style=particle color 2] (21.center) to (23.center);
		\draw [style=particle color 1] (20.center) to (21.center);
		\draw [style=particle tip color 2] (21.center) to (22.center);
		\draw [style=particle color 1] (34.center) to (36.center);
		\draw [style=particle color 1] (36.center) to (38.center);
		\draw [style=particle color 2] (35.center) to (36.center);
		\draw [style=particle tip color 2] (36.center) to (37.center);
		\draw [style=axis] (39.center) to (40.center);
		\draw [style=axis] (39.center) to (41.center);
	\end{pgfonlayer}
\end{tikzpicture}\end{adjustbox}
\par\end{centering}
\caption{\label{fig:stacked}Here, a \textbf{reflected }version of the $h=2$
causality-violating region is stacked on top of the $h=1$ causality-violating
region. These two regions lie in different spaces, as indicated by
the separate coordinate axes. However, this representation makes it
easy to see how particles evolve over multiple histories, and what
the consistency constraints are: that particles on the last wormhole
surface match those on the first one.}
\end{figure}
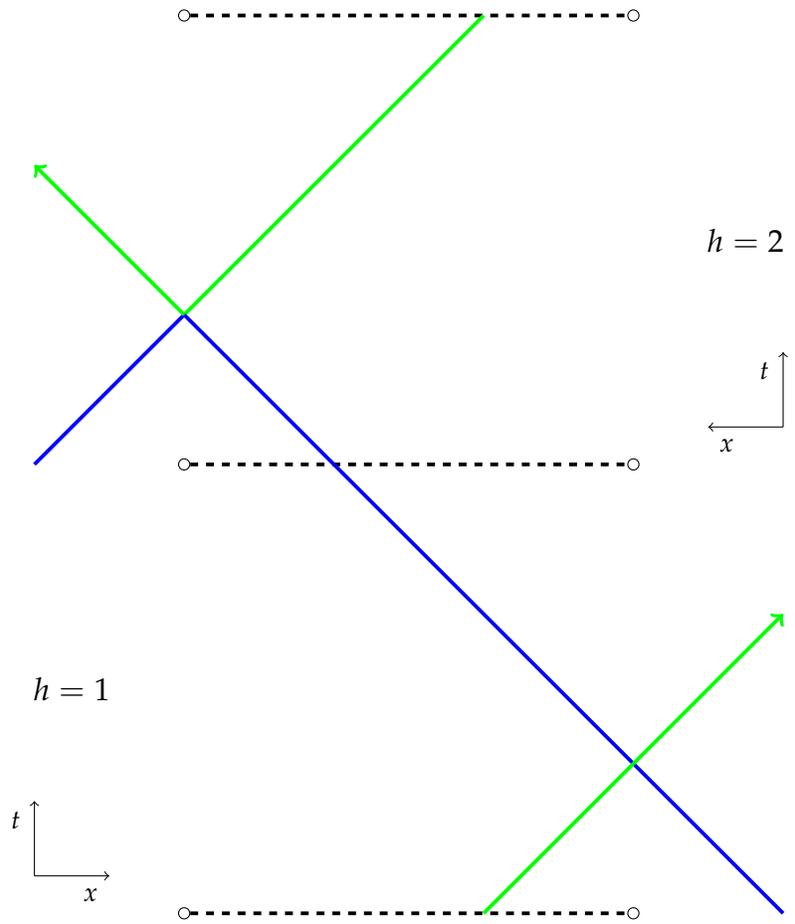

We are interested in whether there exists an assignment of colors
to the $m+n$ particles following CCCs that remains consistent after
the particles have traversed $J^{0}(\M')$ through all $H$ histories.
We begin by determining how these particle colors evolve over one
history, applying Lemma \ref{lem:Group-collision-evolution} to each
group collision.

Let $y_{k}$ be the color of particle number $k$, counting from the
left among those following CCCs, after the zone III collision has
taken place but before entering zone I or II. We first determine how
these values of $k$ relate to those used in Equation (\ref{eq:collision1})
and (\ref{eq:collision2}). For $k\leq m$, particle $y_{k}$ corresponds
to particle $x_{m+k}$ in an $(m,m)$ particle collision in zone I,
and then to particle $x_{k}$ in an $(m,n)$ particle collision in
zone III. For $k>m$, particle $y_{k}$ corresponds to particle $x_{k-m}$
in an $(n,n)$ particle collision in zone II, and then to particle
$x_{k}$ in an $(m,n)$ particle collision in zone III. Thus, evolving
through one history,

\begin{align}
y_{k}' & =y_{k}+\underbrace{\left(\begin{cases}
2(2m-(m+k))+1 & k\leq m,\\
2((k-m)-1)+1 & m<k\leq n,\\
2((k-m)-1)+1 & k>n,
\end{cases}\right)}_{\textrm{Zone I and II collisions}}+\underbrace{\left(\begin{cases}
2(k-1)+1 & k\leq m,\\
2m & m<k\leq n,\\
2(m+n-k)+1 & k>n,
\end{cases}\right)}_{\textrm{Zone III collision}}\nonumber \\
 & =y_{k}+\begin{cases}
2m & k\leq m,\\
2k-1 & m<k\leq n,\\
2n & k>n.
\end{cases}\label{eq:one-hist-evolution}
\end{align}

After passing through the wormhole, the spatial ordering of the particles
is reversed. Thus, particle number $k$ in one history becomes particle
number $m+n-k+1$ in the next. Particles previously satisfying $k\leq m$
now satisfy $k>n$, so these particles all increase in color by $2(m+n)$
after two histories. Also, since 
\[
\left(2k-1\right)+\left(2\left(m+n-k+1\right)-1\right)=2\left(m+n\right),
\]
particles satisfying $m<k\leq n$ also increase in color by $2(m+n)$.
Thus, if $H$ is even, the consistency constraint after traversing
these histories is:

\begin{equation}
y_{k}\equiv y_{k}+H(m+n)\mod C.\label{eq:evenConstraint}
\end{equation}

When $H=1$, the evolution of particle colors has already been given
by Equation \ref{eq:one-hist-evolution}. If $H$ is odd and $H>1$,
we can determine the evolution of particle colors by breaking the
evolution into that over $H-1$ histories (an even number) and that
over the last history. After traversing an odd number of histories,
$y_{k}'$ must equal $y_{m+n-k+1}$ for consistency. Thus, after $H$
traversals for odd $H$, the consistency constraint is:

\begin{equation}
y_{m+n-k+1}\equiv\left(y_{k}+\begin{cases}
(H+1)m+(H-1)n & k\leq m,\\
2k-1+(H-1)(m+n) & m<k\leq n\\
(H-1)m+(H+1)n & k>n,
\end{cases}\right)\mod C.\label{eq:oddConstraint}
\end{equation}

Note that both of these consistency requirements are in the group
$\Z_{C}$. When $k\neq\frac{m+n+1}{2}$ (that is, for all particles
except the middle particle when $m+n$ is odd) it implies that $2H(m+n)\equiv0\mod C$,
and is satisfied for all $m+n$ if and only if $C\mid2H$ (i.e. $C$
divides $2H$, or $2H$ is a multiple of $C$). However, when $m+n$
is odd, the $k=\frac{m+n+1}{2}$ condition implies that $H(m+n)\equiv0$,
and is satisfied for all $m+n$ if and only if $C\mid H$. The requirement
for even $H$ is satisfied if and only if $H(m+n)\equiv0\mod C$;
since all values of $m+n$ are possible, this requires that $C\mid H$. 

Thus, in general, $C\mid H$ is equivalent to the non-existence of
paradoxes for this system, since we can find a consistent solution
for the particle colors. When $C\nmid H$, setting $m=0$ and $n=1$
(corresponding to the natural extension of the paradox in Figure \ref{fig:Paradoxes})
provides a paradox, since the consistency constraints for both even
and odd $H$ cannot be satisfied in this case.
\end{proof}

\subsection{\label{subsec:Avoiding-Bootstrap-Paradoxes}Avoiding Bootstrap Paradoxes}

Although we have found conditions under which consistency paradoxes
can be avoided using a finite number of histories, these solutions
still have \textbf{bootstrap} paradoxes. This is because the particles
in this system are now of two separate types:
\begin{enumerate}
\item Particles that come from infinity also exit to infinity, never entering
the time machine;
\item Particles that emerge from the time machine in the past also enter
the time machine in the future, never leaving the causality-violating
region.
\end{enumerate}
The particles of the second type have no existence outside of the
causality-violating region (or outside CCCs), and therefore they are
``created from nothing'', which implies a bootstrap paradox. However,
we can avoid these bootstrap paradoxes by changing our interpretation
of the vertices.

In \cite{Krasnikov02} it is suggested for the case of $C=2$ that,
instead of as elastic collisions, the vertices can be interpreted
as intersections of \textbf{penetrable} particles, which simply flip
colors when they cross each other's paths. We can extend this interpretation
to our system for general $C$, where it involves a slightly more
complicated interaction: particles which pass through each other adopt
the other particle's color \textbf{plus $1$}. This re-interpretation
is sufficient to remove all bootstrap paradoxes, as depicted in Figure
\ref{fig:penetrable}.

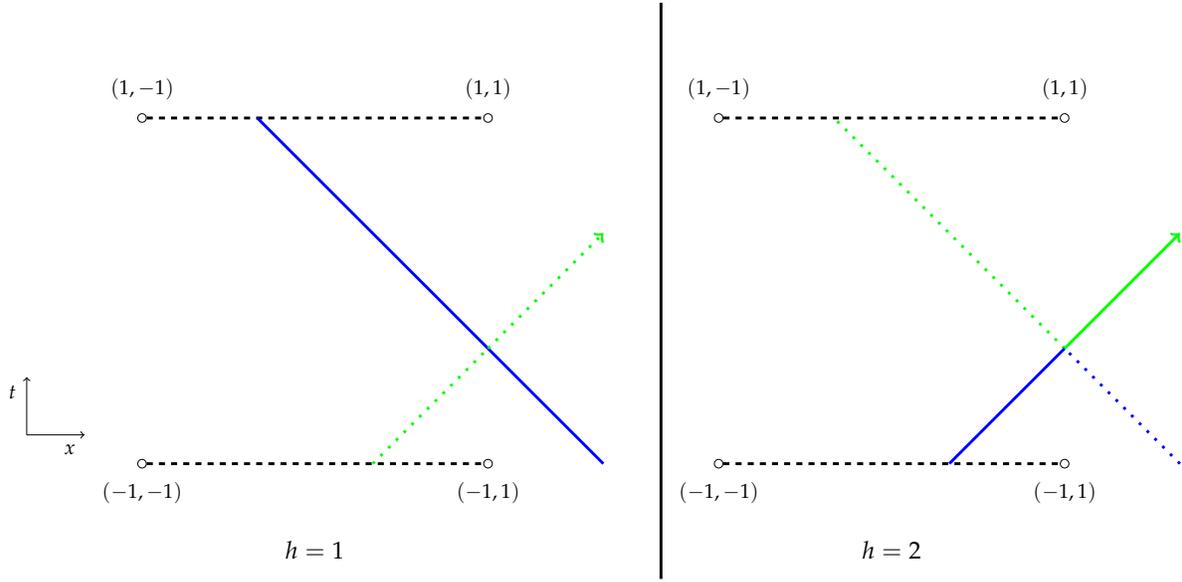
\begin{figure}[!tph]
\centering{}\begin{adjustbox}{width=\textwidth}\begin{tikzpicture}
	\begin{pgfonlayer}{nodelayer}
		\node [style=singularity] (0) at (-9, 3) {};
		\node [style=singularity] (1) at (-3, 3) {};
		\node [style=singularity] (2) at (-9, -3) {};
		\node [style=singularity] (3) at (-3, -3) {};
		\node [style=none] (4) at (-11, -2.5) {};
		\node [style=none] (5) at (-11, -1.5) {};
		\node [style=none] (6) at (-10, -2.5) {};
		\node [style=none] (7) at (-10.25, -2.75) { $x$};
		\node [style=none] (8) at (-11.25, -1.75) { $t$};
		\node [style=none] (9) at (-9, 3.5) { $(1,-1)$};
		\node [style=none] (10) at (-3, 3.5) { $(1,1)$};
		\node [style=none] (11) at (-9, -3.5) { $(-1,-1)$};
		\node [style=none] (12) at (-3, -3.5) { $(-1,1)$};
		\node [style=singularity] (13) at (1, 3) {};
		\node [style=singularity] (14) at (7, 3) {};
		\node [style=singularity] (15) at (1, -3) {};
		\node [style=singularity] (16) at (7, -3) {};
		\node [style=none] (17) at (9, -3) {};
		\node [style=none] (18) at (5, -3) {};
		\node [style=none] (19) at (7, -1) {};
		\node [style=none] (20) at (9, 1) {};
		\node [style=none] (21) at (3, 3) {};
		\node [style=none] (22) at (1, 3.5) { $(1,-1)$};
		\node [style=none] (23) at (7, 3.5) { $(1,1)$};
		\node [style=none] (24) at (1, -3.5) { $(-1,-1)$};
		\node [style=none] (25) at (7, -3.5) { $(-1,1)$};
		\node [style=none] (26) at (0, 5) {};
		\node [style=none] (27) at (0, -5) {};
		\node [style=none] (28) at (4, -4.5) {\large $h=2$};
		\node [style=none] (29) at (-6, -4.5) {\large $h=1$};
		\node [style=none] (30) at (-1, -3) {};
		\node [style=none] (31) at (-5, -3) {};
		\node [style=none] (32) at (-3, -1) {};
		\node [style=none] (33) at (-1, 1) {};
		\node [style=none] (34) at (-7, 3) {};
	\end{pgfonlayer}
	\begin{pgfonlayer}{edgelayer}
		\draw [style=wormhole] (0) to (1);
		\draw [style=wormhole] (2) to (3);
		\draw [style=axis] (4.center) to (5.center);
		\draw [style=axis] (4.center) to (6.center);
		\draw [style=wormhole] (13) to (14);
		\draw [style=wormhole] (15) to (16);
		\draw [style=particle double color 1] (17.center) to (19.center);
		\draw [style=particle double color 2] (19.center) to (21.center);
		\draw [style=particle color 1] (18.center) to (19.center);
		\draw [style=particle tip color 2] (19.center) to (20.center);
		\draw [style=particle 1] (26.center) to (27.center);
		\draw [style=particle color 1] (30.center) to (32.center);
		\draw [style=particle color 1] (32.center) to (34.center);
		\draw [style=particle double color 2] (31.center) to (32.center);
		\draw [style=particle double tip color 2] (32.center) to (33.center);
	\end{pgfonlayer}
\end{tikzpicture}\end{adjustbox}\caption{\label{fig:penetrable}In this illustration, with $C=2$ and $H=2$,
one particle is solid and the other is dashed. The illustration demonstrates
an interpretation in which the particles do not collide; instead,
they \textbf{pass through }each other. This allows us to avoid a bootstrap
paradox. However, the same vertices in Figure \ref{fig:Four-vertices}
still apply.}
\end{figure}

The blue particle coming in from infinity in $h=1$ passes through
the green particle which came out of the time machine. To distinguish
the two particles, the first is indicated by a solid line while the
second is indicated by a dashed line. The particles interact using
vertex (c) in Figure \ref{fig:Four-vertices} -- which now means
that, instead of the particles changing both their directions and
colors, they change neither! The blue particle then goes through the
time machine and exits in $h=2$, where it meets its copy, which is
also blue (recall that the initial conditions are the same in each
history). The copy is indicated by a dashed line\footnote{The solid or dashed lines have no physical meaning, and they are not
properties of the particles themselves; they are just used in the
figure to distinguish one particle from the other in each vertex.
The actual physical property of the particle coming in from infinity
is that it is blue; the fact that it is solid in $h=1$ and dashed
in $h=2$ is simply for the purpose of distinguishing the particle
from its copy.}. The particle and its copy pass through each other, and they interact
using vertex (a) of Figure \ref{fig:Four-vertices}. The solid particle
changes its color to green, and goes out to infinity. The dashed particle
also changes its color to green, and goes into the time machine --
exiting from the past time machine in $h=1$, since the histories
are cyclic.

Neither of the particles actually follows a CCC, and both of them
have a clear start and end outside of the causality-violating region:
the solid particle enters from the right in $h=1$ and exits to the
right in $h=2$, while the dashed particle enters from the right in
$h=2$ and exits to the right in $h=1$. Thus, we avoid a bootstrap
paradox. This readily generalizes to larger values of $C$ and $H$.

\section{\label{sec:Analysis-of-Our}Analysis of Our Model}

\subsection{\label{subsec:Multiple-possibilities}Multiple Consistent Solutions}

Even in the base TDP space without multiple histories, not \textbf{every}
set of initial conditions necessarily causes a paradox. However, even
initial conditions which have consistent solutions still exhibit unusual
properties. Consider, for examples, the two solutions presented in
Figures \ref{fig:Consistent-1} and \ref{fig:Consistent-2}, where
the \textbf{same }initial conditions -- two blue particles coming
in, one from the left and one from the right -- lead to \textbf{two}
consistent color configurations. Thus, the evolution inside the causality-violating
region cannot necessarily be predicted from the initial conditions.
This situation does not usually appear in the absence of CCCs, as
classical physics is in general deterministic\footnote{However, see \cite{Norton}.}.
How will the universe ``decide'' which evolution to use? Choosing
a specific one would require additional assumptions to explain what
is special about that particular evolution.

\begin{figure}[!p]
\centering{}\begin{adjustbox}{width=0.66\textwidth}\begin{tikzpicture}
	\begin{pgfonlayer}{nodelayer}
		\node [style=singularity] (0) at (-3, 3) {};
		\node [style=singularity] (1) at (3, 3) {};
		\node [style=singularity] (2) at (-3, -3) {};
		\node [style=singularity] (3) at (3, -3) {};
		\node [style=none] (4) at (-1, 3) {};
		\node [style=none] (5) at (5, -3) {};
		\node [style=none] (6) at (1, -3) {};
		\node [style=none] (8) at (1, 3) {};
		\node [style=none] (9) at (-5, -3) {};
		\node [style=none] (10) at (-1, -3) {};
		\node [style=none] (11) at (3, -1) {};
		\node [style=none] (12) at (-3, -1) {};
		\node [style=none] (13) at (5, 1) {};
		\node [style=none] (14) at (-5, 1) {};
		\node [style=none] (15) at (0, 2) {};
		\node [style=none] (16) at (-5, -1.25) {};
		\node [style=none] (17) at (-5, -0.25) {};
		\node [style=none] (18) at (-4, -1.25) {};
		\node [style=none] (19) at (-4.25, -1.5) {$x$};
		\node [style=none] (20) at (-5.25, -0.5) {$t$};
		\node [style=none] (21) at (-3, 3.5) {$(1,-1)$};
		\node [style=none] (22) at (3, 3.5) {$(1,1)$};
		\node [style=none] (23) at (-3, -3.5) {$(-1,-1)$};
		\node [style=none] (24) at (3, -3.5) {$(-1,1)$};
	\end{pgfonlayer}
	\begin{pgfonlayer}{edgelayer}
		\draw [style=wormhole] (0) to (1);
		\draw [style=wormhole] (2) to (3);
		\draw [style=particle color 1] (5.center) to (11.center);
		\draw [style=particle color 1] (11.center) to (15.center);
		\draw [style=particle color 2] (15.center) to (4.center);
		\draw [style=particle color 2] (15.center) to (8.center);
		\draw [style=particle color 2] (10.center) to (12.center);
		\draw [style=particle tip color 2] (12.center) to (14.center);
		\draw [style=particle color 1] (9.center) to (12.center);
		\draw [style=particle color 1] (12.center) to (15.center);
		\draw [style=particle color 2] (6.center) to (11.center);
		\draw [style=particle tip color 2] (11.center) to (13.center);
		\draw [style=axis] (16.center) to (17.center);
		\draw [style=axis] (16.center) to (18.center);
	\end{pgfonlayer}
\end{tikzpicture}\end{adjustbox}\caption{\label{fig:Consistent-1}One of the two consistent solutions obtained
by sending particles toward the causality-violating region from both
sides.}
\end{figure}
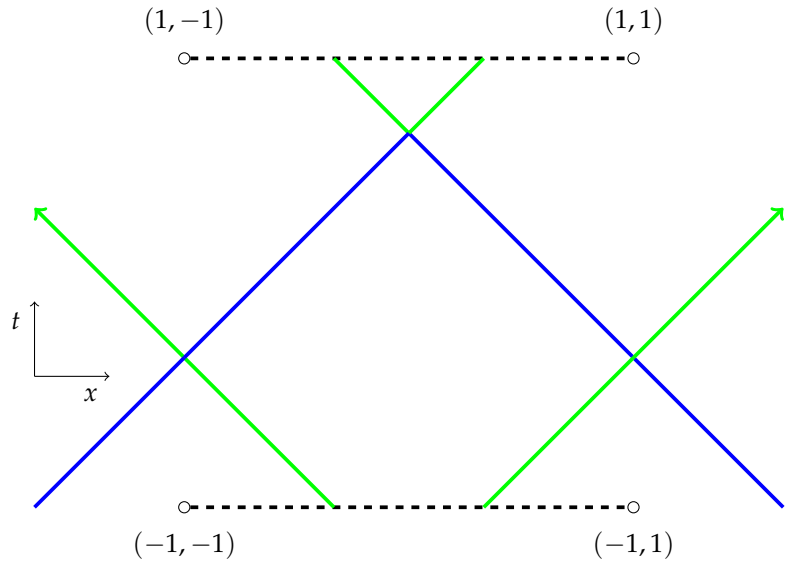

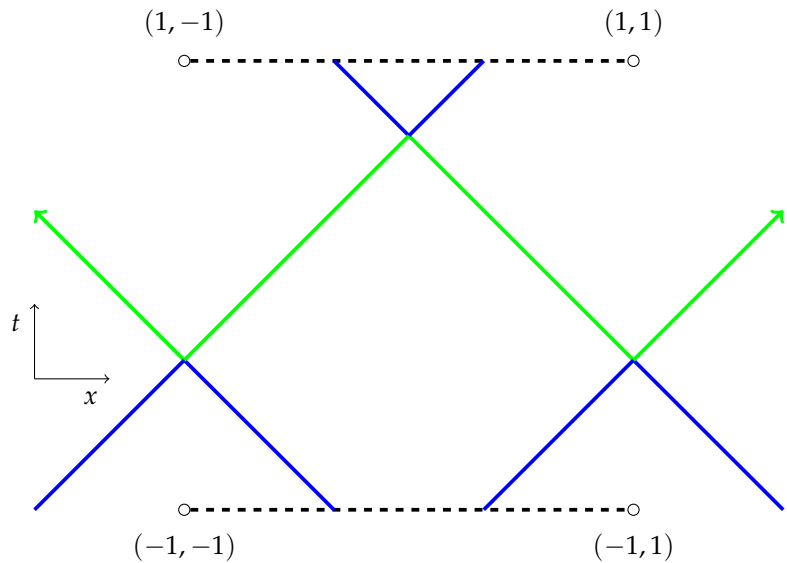
\begin{figure}[!p]
\centering{}\begin{adjustbox}{width=0.66\textwidth}\begin{tikzpicture}
	\begin{pgfonlayer}{nodelayer}
		\node [style=singularity] (0) at (-3, 3) {};
		\node [style=singularity] (1) at (3, 3) {};
		\node [style=singularity] (2) at (-3, -3) {};
		\node [style=singularity] (3) at (3, -3) {};
		\node [style=none] (4) at (-1, 3) {};
		\node [style=none] (5) at (5, -3) {};
		\node [style=none] (6) at (1, -3) {};
		\node [style=none] (8) at (1, 3) {};
		\node [style=none] (9) at (-5, -3) {};
		\node [style=none] (10) at (-1, -3) {};
		\node [style=none] (11) at (3, -1) {};
		\node [style=none] (12) at (-3, -1) {};
		\node [style=none] (13) at (5, 1) {};
		\node [style=none] (14) at (-5, 1) {};
		\node [style=none] (15) at (0, 2) {};
		\node [style=none] (16) at (-5, -1.25) {};
		\node [style=none] (17) at (-5, -0.25) {};
		\node [style=none] (18) at (-4, -1.25) {};
		\node [style=none] (19) at (-4.25, -1.5) {$x$};
		\node [style=none] (20) at (-5.25, -0.5) {$t$};
		\node [style=none] (21) at (-3, 3.5) {$(1,-1)$};
		\node [style=none] (22) at (3, 3.5) {$(1,1)$};
		\node [style=none] (23) at (-3, -3.5) {$(-1,-1)$};
		\node [style=none] (24) at (3, -3.5) {$(-1,1)$};
	\end{pgfonlayer}
	\begin{pgfonlayer}{edgelayer}
		\draw [style=wormhole] (0) to (1);
		\draw [style=wormhole] (2) to (3);
		\draw [style=particle color 1] (5.center) to (11.center);
		\draw [style=particle color 2] (11.center) to (15.center);
		\draw [style=particle color 1] (15.center) to (4.center);
		\draw [style=particle color 1] (15.center) to (8.center);
		\draw [style=particle color 1] (10.center) to (12.center);
		\draw [style=particle tip color 2] (12.center) to (14.center);
		\draw [style=particle color 1] (9.center) to (12.center);
		\draw [style=particle color 2] (12.center) to (15.center);
		\draw [style=particle color 1] (6.center) to (11.center);
		\draw [style=particle tip color 2] (11.center) to (13.center);
		\draw [style=axis] (16.center) to (17.center);
		\draw [style=axis] (16.center) to (18.center);
	\end{pgfonlayer}
\end{tikzpicture}\end{adjustbox}\caption{\label{fig:Consistent-2}The second of the two consistent solutions
obtained by sending particles toward the causality-violating region
from both sides. Note that the initial conditions and final outcomes
are the same as in Figure \ref{fig:Consistent-1} -- two blue particles
coming in and two green particles coming out -- but the evolution
inside the causality-violating region is different. Thus, evolution
in this region cannot be predicted.}
\end{figure}

The same situation also occurs generically in covering spaces of the
TDP space. Referring back to the systems of equations required to
ensure consistency in Section \ref{subsec:Analyzing-Particle-Collisions},
we can determine the number of free color variables, and consequently
the number of distinct consistent solutions. When $H$ is even, the
relevant constraint is Equation \ref{eq:evenConstraint}, where each
particle color is independent of the rest. Thus, when $H$ is even
there are $C^{m+n}$ possible solutions. When $H$ is odd, the relevant
constraint is Equation \ref{eq:oddConstraint}, where most equations
are coupled in pairs, giving $C^{\lceil\frac{m+n}{2}\rceil}$ solutions. 

Thus, the notion of multiple histories seems to arise from causality-violating
spacetimes in two distinct ways: first, when we extend the spacetime
to resolve paradoxes, and second, when multiple outcomes are compatible
with the same initial conditions. This second case seems more akin
to the ``worlds'' of the Everett interpretation than to the histories
solving time travel paradoxes, as these ``worlds'' represent distinct
possible outcomes for the same physical process rather than an outside
intervention due to the presence of a time machine. Confronted with
similar phenomena in a different spacetime, Echeverria, Klinkhammer,
and Thorne suggested in \cite{Consortium91} the possibility of resolving
the situation using a quantum mechanical sum-over-histories method.

\subsection{Revisiting Previous Histories and the Novikov Conjecture}

In Chapter \ref{sec:Finite}, we found that, assuming the number of
colors $C$ is finite, it is sufficient to have $C$ different \textbf{cyclic
}histories in order to resolve every possible paradox, both consistency
and bootstrap. In other words, contrary to popular opinion, one does
\textbf{not} need to prevent going back to previously visited histories
in order to avoid paradoxes.

Note that, since we allowed going back to the very first history,
the time machine will always emit particles from the future as soon
as it is created, which is what one would expect if the Novikov conjecture
is true, but not from a traditional multiple-histories scenario, where
the first history should, by definition, be the one where no one has
``yet'' traveled back in time. However, while the Novikov conjecture
was originally only applied to a single history, it can be applied
more generally, in principle, to larger spacetimes -- even to those
containing multiple histories.

The scenario where travel to the first history is possible therefore
\textbf{extends} the Novikov self-consistency conjecture to multiple
histories. Indeed, under the traditional Novikov conjecture, since
there is only one history, when we open the time machine at $t=-1$,
particles \textbf{must }come out since they went (will go) into the
time machine at $t=+1$. This is similar to how, in the traditional
Novikov conjecture scenario, a time traveler who goes back in time
to kill themselves has, in fact, \textbf{already }gone back and \textbf{already
}failed. There is no history where the time traveler did not go back
in time ``yet'', since there is only \textbf{one }history.

To illustrate this more precisely, consider a scenario where Alice
wants to travel back in time from 2020 to 1950 and kill her grandfather,
Bob, before he met her grandmother. Let us first assume that the Novikov
conjecture is correct, but there is only one history. Then in this
one history, in chronological order, Bob is born in 1930, Alice emerges
from a time machine in 1950 and tries to kill Bob -- but fails, Alice
is born in 1990, and Alice goes into a time machine in 2020. This
is a completely consistent chain of events, and there is no ``other''
universe or history where Alice did \textbf{not }travel back to 1950.

Next, let us assume that there are multiple histories, and that they
are cyclic all the way back to the first history. Then, again, there
is a completely self-consistent chain of events -- however, now it
encompasses more than one history. We will denote the year 2020 in
history A as 2020A, and so on, and we will similarly denote Alice
from history A as Alice A, and so on.

In history A, Bob A is born in 1930A, Alice B emerges from a time
machine in 1950A and tries to kill Bob A \textbf{by releasing a crocodile}
-- but fails, Alice A is born in 1990A, Bob A tells Alice A in 2010A
a story about a woman who looked remarkably like an older version
of her who tried to kill him back in 1950A by releasing a crocodile,
and Alice A goes into a time machine in 2020A determined to kill her
grandfather using another, more efficient method: dropping a piano
on him.

In history B, Bob B is born in 1930B, Alice A emerges from a time
machine in 1950B and tries to kill Bob B \textbf{by dropping a piano
on him} -- but fails, Alice B is born in 1990B, Bob B tells Alice
B in 2010B a story about a woman who looked remarkably like an older
version of her who tried to kill him back in 1950B by dropping a piano
on him, and Alice B goes into a time machine in 2020B determined to
kill her grandfather using another, more efficient method: releasing
a crocodile.

This is a Novikov-like scenario, but with two distinct histories which
are \textbf{not }self-consistent individually, since the murder attempts
in each history are different; when Alice B tries to kill Bob A by
releasing a crocodile, she is deliberately doing something that she
knows \textbf{not }to be consistent with her own history (B), as she
is trying to \textbf{change }history. Although she does manage to
change history from her perspective (into history A), Novikov's conjecture
still conspires to prevent her from changing it in an inconsistent
way; the combination of histories A and B together represents a completely
self-consistent chain of events, spanning two distinct histories.

These examples illustrate how paradox resolution using finite cyclic
histories leads to a novel hybrid scenario, with outcomes characteristic
of both one-history spacetimes satisfying the Novikov conjecture and
multiple-histories models with unlimited histories. With the Novikov
conjecture over only one history, there are closed causal curves,
and paradoxes are avoided when consistency can be enforced along these
curves. The price we have to pay is that the actions time travelers
take after they travel to the past must be predetermined, making time
travel essentially trivial. You can never go back in time to kill
Hitler, because there is only one history, and in this history Hitler
existed\footnote{Or maybe you go back in time to kill Hitler -- but fail, and this
near-death experience turns out to be what caused Hitler to become
an evil dictator in the first place!}.

On the other hand, with an infinite number of histories, paradoxes
are resolved by simply eliminating closed causal curves in the first
place. The price we have to pay is that, since CCCs do not exist,
this is not ``true'' time travel anymore\footnote{But see Footnote \ref{fn:Although-CCCs-are}.}.

Extending the Novikov conjecture to finite cyclic histories provides
a new middle ground for solving time travel paradoxes. The first case
occurs when $H=1$ and the second when $H$ is infinite. In between,
when $H$ is finite and greater than 1, we may have the existence
of closed causal curves over multiple histories while still satisfying
the Novikov conjecture, enabling ``true'' time travel along with
the ability to change history\footnote{\label{fn:Furthermore,-the-traditional}Furthermore, the traditional
Novikov scenario, with only one history, does not leave any room for
``free will'', since Alice cannot make any choice that will change
the past; if Alice already knows how her future self attempted to
kill Bob in the past, then she will simply \textbf{not be able to
choose} to try killing him in another way. However, with finite cyclic
histories, Alice \textbf{does}, in fact, have the capacity to change
history -- as Alice A and B did in the example above. If she ever
succeeds, then the chain of histories is simply terminated; however,
it is also possible that she fails every single time, in which case
the histories can (but do not necessarily have to) be cyclic. Thus,
this scenario provides at least the \textbf{illusion }of ``free will''.
Importantly, note that since different Alices exit the time machine
in each history, Alice B does not have a memory of what happened in
history A, so as far as she is concerned, she has the ``free will''
to do whatever she wants. More generally, there is never a situation
where Alice knows what is supposed to happen and finds out that she
simply does not have the ability to change it, which is the main issue
with Novikov's conjecture. Each history is a completely new history,
from Alice's perspective, with endless possibilities and nothing predetermined.}.

\subsection{\label{sec:Observable-Consequences}Observable Consequences}

We now have \textbf{four }different ways in which our universe might
resolve time travel paradoxes:
\begin{enumerate}
\item \textbf{The Hawking conjecture:} Time travel is simply impossible.
\item \textbf{The Novikov conjecture:} There is only one history, and it
can never be changed.
\item \textbf{Branching spacetime scenario:} Observers who travel back in
time find themselves in a new history and unable to go back to a previous
history. Furthermore, there is a unique first history.
\item \textbf{Covering space scenario:} There is no unique first history
and it is possible to return to a previous history when the number
of histories is finite -- as long as the Novikov conjecture applies
to the long closed causal curves which traverse all of the histories
(as opposed to each history individually).
\end{enumerate}
How may we experimentally determine which approach, if any, is realized
in our universe? First, if we successfully build a time machine, then
we have disproved the Hawking conjecture\footnote{As is usually the case, finding a counter-example to the conjecture
is much easier than proving that it is true in all cases. To \textbf{prove
}the Hawking conjecture, it is not enough to simply not succeed in
building a time machine, since it is always possible that a time machine
\textbf{can }be built, but we are just not skillful enough to build
it. The proof must therefore be a theoretical one; we must have access
to the most fundamental theory of physics (if such a theory exists...),
and use that theory to mathematically prove that a time machine \textbf{cannot
}be built, even in principle.}. Let us thus assume that it is indeed possible to build a time machine,
and discuss how to distinguish among options 2, 3, and 4. Consider
a simple experiment where Alice builds the time machine described
by the TDP space, which connects $t=+1$ to $t=-1$.

First, let us assume that Alice notices that another Alice did \textbf{not
}emerge from the time machine at $t=-1$. She then enters the time
machine at $t=+1$ and meets a copy of herself, who confirms that
the time is now $t=-1$. Both Alices now know that there are at least
two independent histories: the one where Alice did not exit the time
machine at $t=-1$, and the one where she did. Among the four models
we examine in this section, the Alices conclude that the \textbf{branching
spacetime} scenario must be the correct one, since Alice necessarily
came from the \textbf{first }history, $h=1$, and arrived at another
history, $h=2$.

Alternatively, let us assume that Alice (we shall now call her Alice
A) notices that another Alice (Alice B) \textbf{did }emerge from the
the time machine at $t=-1$. Then the Alices can try to change something
that Alice B remembers happening, which should be trivial assuming
that Alice B remembers everything that happened between $t=-1$ and
$t=+1$ in her history. For example, if Alice B remembers that she
said ``1'' then Alice A can try to say ``2'' instead.
\begin{itemize}
\item If they succeed in changing something, then they can conclude that
they live in a \textbf{covering space }scenario -- since there is
no first Alice, but also more than one history.
\item If they fail to do so, then they can suspect that they are in a \textbf{Novikov
conjecture }scenario.
\end{itemize}

\section{\label{sec:Discussion-and-Future}Summary and Future Plans}

In this paper, we introduced a (1+1)-dimensional model for a spacetime
with a time machine and multiple histories, and showed how time travel
paradoxes within this model are inevitable unless one allows for sufficiently
many histories. An infinite number of histories is certainly sufficient;
however, we also showed that a finite number of cyclic histories is
sufficient within our particular model, producing a variation of Novikov's
conjecture which spans multiple histories. This scenario contains
closed causal curves, unlike traditional multiple-histories resolutions,
while also allowing time travelers to actually change the past, unlike
Novikov's conjecture over only one history. Therefore, it provides
a good middle ground between the two. We also suggested how to experimentally
determine, at least in principle, whether our universe is described
by the Hawking conjecture, the Novikov conjecture, a branching spacetime
model, or a covering space model.

There are several important issues that we did not discuss here, including
the following:
\begin{itemize}
\item We did not provide an actual physical mechanism for creating new histories;
we merely assumed them, as is usually done in the literature.
\item We asserted that, in the TDP space, a time traveler moves from one
history to another while traversing the wormhole. However, we did
not develop a prescription for determining at which point along a
closed timelike or causal curve this transition between histories
happens in more general spacetimes. This question is of particular
concern in the case of more ``realistic'' time travel models, such
as those using warp drives or wormholes with non-zero throat length.
As time travel in this case involves traversing a non-zero distance,
it is unclear where exactly along this journey the new history should
be created. This problem becomes even more complicated when one considers
that closed curves, by definition, do not have a beginning or end!
\end{itemize}
We hope to address these issues in future work. Other intriguing avenues
of future research include generalizing our model in different ways,
such as the following:
\begin{itemize}
\item Formulating the model in 2+1 and 3+1 spacetime dimensions.
\item Employing realistic physical laws, ideally given by a well-defined
Lagrangian.
\item Allowing particles to travel along timelike paths in addition to null
paths.
\item Allowing additional time machines.
\item Allowing time machines to be turned on and off.
\end{itemize}
Multiple histories are, in our opinion, the most compelling of the
existing approaches for resolving time travel paradox. Hawking's conjecture
simply prevents time travel from happening in the first place, while
Novikov's conjecture allows time travel, but in an extremely limited
way, where the past cannot be changed and the time traveler cannot
exercise their ``free will''. If either conjecture is true, it would
make life much less interesting.

In contrast, the multiple-histories approach allows changing the past,
and at least the illusion of ``free will'' -- thus making the universe
considerably more exciting. In addition, it challenges many fundamental
notions in mathematics, physics, and philosophy, and opens up stimulating
new avenues of research. Yet, there is surprisingly little literature
about it. Furthermore, our presentation in this paper of a novel approach
-- the cyclic multiple-histories approach, which extends the Novikov
conjecture to multiple histories and exhibits hybrid behavior characteristic
of both the Novikov conjecture and multiple histories -- may provide
new and interesting ways in which time travel paradoxes can be discussed
and analyzed.

We hope that this paper will inspire mathematicians, physicists, and
philosophers to work on the formulation of a consistent and well-defined
framework for physics with multiple histories, both in relation to
time travel paradoxes and in other contexts, such as the Everett interpretation
of quantum mechanics.

\section{Acknowledgments}

B. S. would like to thank (in alphabetical order) Daniel C. Guariento,
Kasia Rejzner, and Rafael Sorkin for stimulating discussions. J. H.
would like to thank Matthew Fox for helpful discussions and Perimeter
Institute for providing the opportunity to perform this research as
part of its undergraduate summer program.

Research at Perimeter Institute is supported in part by the Government
of Canada through the Department of Innovation, Science and Economic
Development Canada and by the Province of Ontario through the Ministry
of Colleges and Universities.

\bibliographystyle{Utphys}
\phantomsection\addcontentsline{toc}{section}{\refname}\bibliography{TimeTravelParadoxes}

\end{document}